\newtheorem{theorem}{Theorem}
\newtheorem{corollary}{Corollary}
\newtheorem{definition}{Definition}
\newtheorem{lemma}{Lemma}
\newenvironment{proof}[1][Proof]{\noindent\textbf{#1.} }{\ \rule{0.5em}{0.5em}}
\newcommand{\NP}{\ensuremath{\mathtt{NP}}\xspace}
\newcommand{\PP}{\ensuremath{\mathtt{P}}\xspace}
\newcommand{\clique}{\textsc{Clique}\xspace}
\newcommand{\tcal}{\ensuremath{\mathcal{T}}\xspace}
\newcommand{\tuple}[1]{\ensuremath{\langle {#1} \rangle}\xspace}
\newcommand{\tmax}{\ensuremath{t_{\max}}\xspace}
\newcommand{\tgraph}{\ensuremath{\tuple{G,\tcal}}\xspace}
\newcommand{\tcalm}{\ensuremath{\mathcal{T}^M_{(\lambda,\mu)}}\xspace}
\newcommand{\tcald}{\ensuremath{\mathcal{T}_{\delta}}\xspace}
\newcommand{\tcaldk}{\ensuremath{\mathcal{T}_{(\delta, \kappa)}}\xspace}
\newcommand{\reach}{\ensuremath{\mathtt{reach}}\xspace}
\newcommand{\merging}[1]{({#1})-\textsc{Merging}\xspace}
\newcommand{\delaying}[1]{{#1}-\textsc{Delaying}\xspace}
\newcommand{\minreach}{\textsc{MinReach}\xspace}
\newcommand{\maxreach}{\textsc{MaxReach}\xspace}
\newcommand{\minavgreach}{\textsc{MinAvgReach}\xspace}
\newcommand{\maxavgreach}{\textsc{MaxAvgReach}\xspace}
\newcommand{\minmaxreach}{\textsc{MinMaxReach}\xspace}
\newcommand{\maxminreach}{\textsc{MaxMinReach}\xspace}
\newcommand{\timed}{\textsc{Timed}\xspace}
\newcommand{\maxsat}{\textsc{Max2Sat(3)}\xspace}
\newcounter{probc}
\newcommand{\probc}[1]{\refstepcounter{probc}\label{#1}}
\definecolor{darkgreen}{rgb}{0,0.6,0}
\newcommand{\kibitz}[2]{\ifnum\Comments=1{\color{#1}{#2}}\fi}
\newcommand{\argy}[1]{\kibitz{red}{[ARGY:#1]}}
\providecommand{\keywords}[1]{\textbf{Keywords: } #1}
\begin{document}

\title{\vspace{-0.5cm}Optimizing Reachability Sets in Temporal Graphs by Delaying
\thanks{A preliminary conference version of this work appeared
in Proceedings of the 34th AAAI Conference on Artificial Intelligence (AAAI), New York, New York, USA, 2020.}}

\author{Argyrios Deligkas\thanks{%
Department of Computer Science, Royal Holloway University of London, UK. Email: \texttt{argyrios.deligkas@rhul.ac.uk}.
This work was done while the author was a postdoc at Materials Innovation Factory at the University of Liverpool, funded 
by LRC.} 
\and Igor Potapov\thanks{%
Department of Computer Science, University of Liverpool, UK. Email: \texttt{potapov@liverpool.ac.uk}. Research partially funded by the grant EP/R018472/1 ``Application driven Topological Data Analysis''.}}
\date{\vspace{-1.0cm}}

\maketitle

%%%%%%%%%%%%%%%%%%%%%%%%%%%%%%%%%%%%%%%%%%%%%%%%%%%
%%%%%%%%%%%%%%%%%%%%%%%%%%%%%%%%%%%%%%%%%%%%%%%%%%%
%%%%%%%%%%%%%%%%%%%%%%%%%%%%%%%%%%%%%%%%%%%%%%%%%%%

\begin{abstract}
A temporal graph is a dynamic graph where every edge is assigned a set of integer 
time labels that indicate at which discrete time step the edge is available. 
In this paper, we study how changes of the time labels, corresponding to delays 
on the availability of the edges, affect the reachability sets from given sources.
The questions about reachability sets are motivated by numerous applications of 
temporal graphs in network epidemiology, which aim to minimise the spread of 
infection, and scheduling problems in supply networks in manufacturing with
the opposite objectives of maximising coverage and productivity.
%The questions about reachability sets are motivated by rapid  development of 
%this area in relation to numerous applications of temporal graphs in network 
%epidemiology, or scheduling problems in supply networks in manufacturing.
%
%We introduce two natural delaying operations and we study their complexity. 
We introduce control mechanisms for reachability sets that are based on two natural
operations of delaying. %time events which significantly affecting the chains of these events.
%
%The first one, termed merging, is global and batches together consecutive time 
%labels in the whole network simultaneously. 
The first operation, termed merging, is global and batches together consecutive time
labels into a single time label in the whole network simultaneously. This corresponds to postponing
all events until a particular time.
The second, imposes independent delays on the time labels of every edge of the 
graph.
We provide a thorough investigation of the computational complexity of different 
objectives related to reachability sets when these operations are used.
For the merging operation, i.e. global lockdown effect, we prove NP-hardness results for several minimization 
and maximization reachability objectives, even for very simple graph structures.
%For the merging operation, we prove NP-hardness results for several minimization
%and maximization reachability objectives, even for very simple graph structures
%highlighting complex nature and non-trivial complexity of these problems.
%
For the second operation, independent delays, we prove that the minimization problems 
are NP-hard when the number of allowed delays is bounded. 
We complement this with a polynomial-time algorithm for
minimising the reachability set in case of unbounded delays.
%

%%%
%

%A temporal graph is the variant of a dynamic graph, where
%a set of pairwise relations between objects is not permanent and
%can be changed in discrete time steps. 
%In this paper we study how the changes of the time labels (e.g. corresponding to global delays)
%for a given temporal graph can effect its (temporal) connectivity
%estimating the size of the reachability sets from given sources.
%These reachability questions have been motivated by the applications of 
%the temporal graphs to model various disease-spreading contact networks or supply networks in manufacturing.
%In particular we analyze the mechanisms of independent delays implemented by merging operations
%and show that finding the optimal set of delays
%with different minimization and maximization reachability objectives 
%is NP-hard even for very simple graph structures.
%We provide several new encodings of satisfiability questions into the realm of 
%temporal graphs and independent delays that reveal various new mechanisms of controlling 
%reachability sets in indirect way. 
\end{abstract}

\keywords{Temporal Graphs; Reachability Sets; Optimisation; NP-hard}

\section{Introduction}
\label{sec:intro}
A plethora of real life scenarios can be modelled as a dynamic network that changes over time. 
These scenarios range from train, bus, and distribution schedules, to livestock movements
between farms and virus spreading. Many of these dynamic networks consist of a fixed set of 
nodes and what changes over time is the connectivity between pairs of them; the locations of the
stations, distribution centers, and farms remain the same over time, while the connections 
between any two of them can change every few minutes, hours, or days. An equivalent way to see
these networks is as a sequence of static networks that change according to a predetermined, 
known in advance, schedule. 
These type of networks, known as {\em temporal graphs}, were formalized in the seminal work 
of~\cite{kempe02}. Since then, there is flourish of work 
%studying problems 
on temporal graphs
\cite{zschoche_MFCS2018,Michail16,Casteigts2019,chen_et_al:LIPIcs:2018:9972,MICHAIL20161,Erlebach2015}.
Formally, in a temporal graph every edge is assigned a set of integer time labels that indicate 
at which discrete time steps the edge is available. In other words, a temporal graph 
is a schedule of edge sets $E_1, E_2, \ldots, E_t$ over a fixed set of vertices. 

In this paper, we study questions related to {\em reachability sets} on temporal graphs. 
The reachability set of vertex $v$ is the set of vertices that there exist {\em temporal paths}
from $v$ to them. 
Informally speaking, a (strict) temporal path %in a temporal graph 
is a path whose edges, 
additionally to the usual connectivity, use (strictly) increasing time labels~\cite{Whitbeck2012}. 
However, in contrast to static graphs, temporal paths do not preserve transitivity. As a result,
some well known concepts from graph theory, like Menger's theorem, do not hold for 
temporal graphs and have to be redefined~\cite{AKRIDA2019,Mertzios2013,erlebach_MFCS2018}.

Reachability sets emerge naturally in a wide range of models and real life applications \cite{IEEE2014,Potapov-2004,Niskanen2020}.
One of them, related to temporal graphs, is the minimization of spread of infectious diseases over contact networks. 
The recent case of coronavirus COVID-19 once again reveals the importance of  
effective prevention and control mechanisms \cite{COVID}.
Data provide significant evidence that commuter patterns and airline flights in case of infectious diseases in humans
\cite{colizza2006role,brockmann2013hidden},  and livestock movements between farms for the animal case 
\cite{mitchell2005characteristics} %that can be modelled via temporal graphs, 
could spread infectious diseases. 
In particular, the impact of animal movements has been extensively studied within the 
epidemics community. Recent studies have shown that restricting of animal movements is 
among the most effective methods for controlling the spread of an 
infection~\cite{jones2019bluetongue,turner2019effect,turner2012modelling,thulke2011role,buhnerkempe2014impact} 
and changes in the network of animal movements between farms (nodes in a graph) can
significantly decrease the spread of an infection~\cite{gatescontrolling,mohr2018manipulation}.
Contrary to the applications above, there are cases where we wish to maximize the reachability sets.
Consider for example a distribution schedule, where the use of the underlying network for every day 
comes at a fixed cost. Then, the goal is to optimally utilize the infrastructure of the network by 
choosing which days to use the network in order to achieve maximum reachability while maintaining low
cost.

The importance of these problems combined with their inherit temporal nature, made 
temporal graph theory a tool for analyzing  epidemiology in (animal) networks
\cite{braunstein2016,COA2015,abs-1802-05905,Deleting_Edges,ENRIGHT201888,noremark2014,valdano2015analytical,valdano2015predicting}. 
In \cite{abs-1802-05905} and \cite{Deleting_Edges} the authors studied how reachability sets on temporal 
graphs change, when the schedule of the edges can change according to specific operations. More 
specifically, they studied \minmaxreach and \maxminreach problems where the goal, respectively, was to minimize 
the maximum, or to maximize the minimum size of reachability sets of a given set of sources.
In \cite{abs-1802-05905} these objectives were studied  under the operation of {\em reordering} of the edge 
sets and it was proven that both problems are \NP-hard. 
For \minmaxreach, \cite{Deleting_Edges} studied the operations of deletion of whole edge sets, or 
individual edges within an edge set. It was proven that \minmaxreach under both notions of deletion 
is \NP-hard, but they admit an FPT algorithm when they are parameterized by the treewidth of the network 
and the maximum permitted number of reachable vertices.

Although optimization of reachability sets capture important real life problems, some of the proposed 
solutions are not completely satisfying due to big changes in infrastructure.
For example, reordering of edge sets can be difficult, costly, or even impossible to perform due to the 
physical constraints of the network, or due to the number of changes required in the existing infrastructure. 
For edge-deletions, and delay-based operations studied in this paper, an upper bound on the number of allowed deletions is required. This is crucial, since 
the deletion of every edge trivially minimizes the reachability sets, but makes the existing network
infrastructure useless. In addition, edge deletions can create a ``bottleneck'' problem in the network even 
if their number is bounded. The deletion of an edge can create a sink to the network, or completely isolate 
some of its parts. 
Instead, we 
wish to study the following problem.
\begin{quote}
{\em Given a temporal graph and a set of sources, optimize the size of the 
reachability set of the sources using only {\em ``natural''} and {\em ``infrastructure-friendly''} 
operations.}
\end{quote}
\noindent
Natural operations should be intuitive and should be easy to apply; deletion, or postponement, of a 
temporal edge can naturally happen. On the other hand, bringing forward a temporal edge cannot be 
always feasible due to physical constraints and infrastructure constraints. Infrastructure-friendly operations should 
not be to difficult to perform and   should not require many changes to the given network and temporal 
schedule.

\paragraph{\bf Our contribution.}
Our contribution is twofold. Firstly, we introduce and study two operations, {\em merging} and 
{\em delaying}, that are natural and infrastructure-friendly and were not studied in the past.  
The idea behind both operations is the postponement of the edges of the graph. % in a certain way. 
Merging operation is parameterized by $\lambda$ and it  batches together $\lambda$ consecutive 
edge sets; a $\lambda$-merge on $E_1, \ldots E_\lambda$ changes the first $\lambda-1$ edge sets to 
the empty sets, and the $\lambda$-th edge set is the union of all $\lambda$ edge sets. The delaying
operation, independently delays a temporal edge; a $\delta$-delay on the label $i$ of edge $uv$ changes
it to $i+\delta$. %Both operations are natural, easy to understand, and infrastructure-friendly. 
In contrast to deletion of temporal edges that can directly isolate vertices, 
our operations isolate some vertices {\em only temporarily}. Our second contribution is a
thorough investigation of the computational complexity of reachability objectives under these operations.
In particular, we study the \minmaxreach, \minreach, and \minavgreach, where in the last two cases the goal 
is to minimize the number, or the average number respectively, of reachable vertices from a given set of 
sources.
With respect to maximization objectives, we study \maxreach, \maxminreach, and \maxavgreach. 
We proved that these problems are \NP-hard under the merging operation even for very restricted classes 
of graphs; see Table~\ref{tab:results}. Our results can be extended even in the graphs are directed acyclic graphs, or even uni disk graphs.
For the delaying operation, we studied the minimization problems. They remain \NP-hard when we {\em bound} 
the number of times we are allowed to use this operation. We complement these results with a polynomial 
time algorithm for the case of the unbounded number of delays that works for any $\delta$.

\begin{table}
\begin{center}

\begin{tabular}{| l | c | c | c | c |}
\hline
Problem & Graph & Sources & Labels/ & Edges/ \\
 & Class &  & Edge & Step \\
\hline
\hline
$\minreach$ & Path & $O(n)$ & 1 & 3 \\ 
\hline
$\minreach$ & Tree &  &  &  \\
$\minmaxreach$ & $\Delta = 3$ & 1 & 1 & 1 \\ 
$\minavgreach$ &   & & &\\ 
\hline
$\maxreach$ & Path & $O(n)$ & 1 & 4 \\ 
\hline
$\maxreach$ & Bipartite &  &  &  \\
$\maxminreach$ & $\Delta = 3$ & 1 & 1 & 4 \\ 
$\maxavgreach$ &   & & &\\
\hline
$\maxreach$ & Tree &  &  &  \\
$\maxminreach$ & $\Delta = 3$ & 1 & 1 & 8 \\ 
$\maxavgreach$ &   & & &\\
\hline
\end{tabular}
\caption{\NP-hardness results for the $\lambda$-merge operation for any $\lambda \geq 2$.
$\Delta$ denotes the maximum degree of the graph; Labels/Edge denotes the maximum number of labels an edge can have, i.e., how many times it is available; Edges/Step denotes the maximum number of edges available at any time step.}
\label{tab:results}

\end{center}
\end{table}

%%%%%%%%%%%%%%%%%%%%%%%%%%%%
%%%%%%%%%%%%%%%%%%%%%%%%%%%%
%%%%%%%%%%%%%%%%%%%%%%%%%%%%

%\newpage
\section{Preliminaries}
%\label{sec:prelims}
%In what follows, $[n] := {1,2,\ldots, n}$. 

\paragraph
%\noindent
{\bf Temporal Graphs.} 
A {\em temporal graph} \tgraph is a pair of a (directed) graph $G=(V,E)$ and a 
function \tcal that maps every edge of $G$ to a list of time steps at which the
edge is available. The maximum time step, $\tmax$, on the edges of $G$ defines 
the {\em lifetime} of the temporal graph.
Another interpretation of a temporal graph, which is useful in our case, is to see 
it as a schedule of subgraphs, or edge-sets, $E_1, E_2, \ldots, E_{\tmax}$ of $G$,
known in advance and defined by the function $\tcal$; at time step $t$ function 
$\tcal$ defines a set $E_t \subseteq E$ of edges available for this time step.
We will say that an edge has the label $i$, if it is available at time step $i$. 
The size of a temporal graph \tgraph is $|V| + \sum_{t \leq \tmax} |E_t|$.
A {\em temporal path} in $\tgraph$ from a vertex $v_1$ to a vertex $v_k$ is a sequence of edges 
$v_1v_2, v_2v_3, \ldots, v_{k-1}v_k$  such that each edge $v_iv_{i+1}$ is available at time step $t_i$ and $t_i \leq t_{i+1}$ for every $i \in [k-1]$. A temporal path is {\em strict} if $t_i < t_{i+1}$. In what follows, unless stated otherwise, we only consider strict temporal paths.

\paragraph {\bf Reachability Sets.} 
A vertex $u$ is {\em reachable} from vertex $v$ if there exists a temporal 
path from $v$ to $u$. We assume that a vertex is reachable from itself.
It is possible that $u$ is reachable from $v$, but $v$ is not reachable from 
$u$. The reachability set of $v$, denoted $\reach(v,\tgraph)$, is the set of 
vertices reachable from $v$ in \tgraph at time $\tmax$.  
Given a temporal graph with lifetime $\tmax$ and a time step $t < \tmax$,  
the set $\reach_{t}(v,\tgraph)$ contains all the vertices reachable from $v$ 
by time $t$. 
The set $\reach(v,\tgraph)$ can be computed in polynomial time with respect to 
the size of $\tgraph$; it suffices to check whether there exists 
a temporal path from $v$ to every vertex in $V$, which can be done efficiently in polynomial time \cite{WC+14}.

%\noindent
\paragraph{\bf Merging.}
A {\em merging} operation on $\tcal$ postpones some of the edge-sets in a particular way. 
Intuitively, a merging operation ``batches together'' a number of consecutive edge-sets, making them all available at a later time step; see Figure~\ref{fig:example} for an example.
\begin{definition}[$\lambda$-merge]
\label{def:kmerge}
For every positive integer $\lambda$, a $\lambda$-merge of $E_i, E_{i+1}, \ldots, 
E_{i+\lambda-1}$, replaces $E_j = \emptyset$ for every $i \leq j < i+\lambda-1$ and 
$E_{i+\lambda-1} = \bigcup_{i \leq j \leq i+\lambda-1}E_j$.
\end{definition}
Thus, every merge corresponds to the global delay of events from times 
$i, i+1, \ldots, i+\lambda -1$ to the time $i+\lambda-1$.
We say that two $\lambda$-merges are {\em independent}, if they merge 
$E_i, \ldots E_{i+\lambda-1}$ and $E_j, \ldots E_{j+\lambda-1}$ respectively, and
$i +\lambda - 1 < j$. \footnote{Independent merges (delays) are commutative in respect to their application. 
For example let us  consider a temporal graph with $4$ nodes 
$x_1 {\overset{1}{\longrightarrow}} x_2 {\overset{2}{\longrightarrow}} x_3 {\overset{3}{\longrightarrow}} x_4$ 
and two non-independent merges $E_1, E_2$ and $E_2, E_3$. Following the order of   $E_1, E_2$ and then  
$E_2, E_3$ we will have a graph 
$x_1 {\overset{3}{\longrightarrow}} x_2 {\overset{3}{\longrightarrow}} x_3 {\overset{3}{\longrightarrow}} x_4$ 
and the alternative order of merges  $E_2, E_3$ and then  $E_1, E_2$ will give
$x_1 {\overset{2}{\longrightarrow}} x_2 {\overset{3}{\longrightarrow}} x_3 {\overset{3}{\longrightarrow}} x_4$, 
so it creates some ambiguity for non-independent merges.}

When it is clear from the context, instead of writing that we merge $E_i$ with $E_{i+1}$
we will write that we merge $i$ with $i+1$.
\begin{definition}[$(\lambda,\mu)$-merging scheme]
\label{def:merging-scheme}
For positive integers $\lambda$ and $\mu$, a {\em $(\lambda,\mu)$-merging scheme} applies
 $\mu$ independent $\lambda'$-merges on $E_1, E_2, \ldots, E_{\tmax}$, where $\lambda' \leq \lambda$.
A merging scheme is {\em maximal} if there is no other feasible $\lambda$-merge available after applying all the other merges in the scheme. %
\end{definition}

\begin{figure}[h!]
\begin{center}
  \includegraphics[scale=0.3]{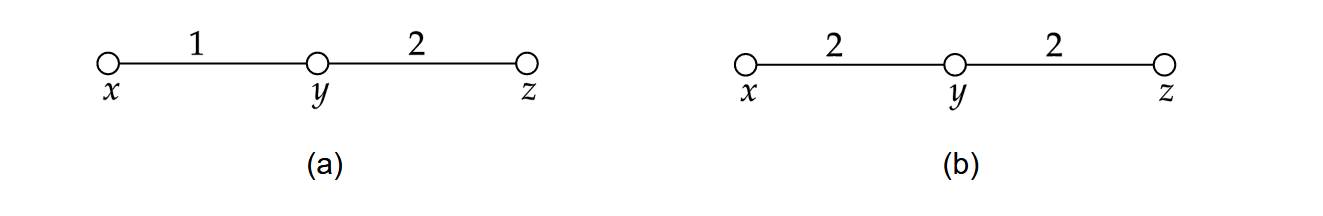}
  \caption{(a) A temporal graph where $E_1 = \{xy\}$ and $E_2 = \{yz\}$; 
  vertex $z$ is reachable from $x$. 
  (b) The resulted graph after merging $E_1$ with $E_2$; vertex $z$ is no
  longer reachable from $x$.}
    \label{fig:example}
  \end{center}
\end{figure}

%%%%%%%%%%%%%%%%%%%%%%%%%%%%%%%%%%%%
%\subsection{Merging and reachability problems}
A $(\lambda,\mu)$-merging scheme for a temporal graph \tgraph essentially 
produces a new temporal graph by modifying the schedule \tcal using $\mu$ 
independent $\lambda'$-merges, where $\lambda' \leq \lambda$. We will use $\mathcal{T}^M_{(\lambda,\mu)_i}$ to denote the modified schedule according to the merging scheme $(\lambda,\mu)_i$  and $\tuple{G, \mathcal{T}^M_{(\lambda,\mu)_i}}$ the corresponding modified temporal graph. When it is clear from the context, for notation brevity, we will use $\tcalm$ instead of $\mathcal{T}^M_{(\lambda,\mu)_i}$. A merging scheme is {\em optimal}  for an objective if it optimizes the reachability sets with respect to this objective.

Our goal is to compute merging schemes that optimise some 
objectives regarding reachability sets from a given set of vertices. 
The input to the problems we study consists of a temporal graph \tgraph, 
two positive integers $\lambda, \mu$, and a subset of vertices $S \subseteq V$ 
which we will term {\em sources}. The objectives  \minreach, \minmaxreach, \minavgreach, \maxreach, \maxminreach, \maxavgreach we study under $(\lambda,\mu)$-merging schemes are formally defined in the second column of Table~\ref{tab:prob}. 

Note that {\em any} merging operation monotonically decreases the size of the reachability set for {\em any} set of sources.
This does not impose any constraints when we are studying minimization objectives, since every extra merge can improve our objective. 
However, when we are studying maximization objectives under a $(\lambda, \mu)$-merging scheme we will require that the merging scheme will perform {\em at least} $\mu$ independent merges of length $\lambda$ each.

\paragraph{\bf Delaying.}
While merging operations affect globally the whole graph; {\em edge delays} affect only 
one label of an edge. We parameterize the delay operation by $\delta$; the maximum delay 
we can impose on a label of an edge. Hence, a $\delta$-delay on edge $uv$ at time step $i$
changes the label $i$ to $i' \leq i+\delta$. A $(\delta,\kappa)_i$-delaying scheme applies at most $\kappa$ $\delta$-delay operations on a schedule \tcal. We will denote $\mathcal{T}^D_{(\delta,\kappa)_i}$ the produced schedule. If we allow an unbounded number of delays, we will omit $\kappa$ and we will simply refer to such a scheme as $\delta$-delaying.
When it is clear from the context, for notation brevity, we will use $\tcaldk^D$ instead of $\mathcal{T}^D_{(\delta,\kappa)_i}$. The objectives we study are defined at the third column of Table~\ref{tab:prob}.

\probc{probc:minreach}
\probc{probc:minmax}
\probc{probc:minavg}
\probc{probc:maxreach}
\probc{probc:maxmin}
\probc{probc:maxavg}

%%%
%%%
\begin{comment}
\begin{table}[]
\begin{center}
\begin{tabular}{| l || l |}
\hline
Problem & Objective \\
\hline
\hline
 & \\[-0.75em]
 \ref{probc:minreach}. $\minreach~ \merging{\lambda,\mu}$ &  $\min |\bigcup_{v \in S}\reach(v, \tuple{G,\tcalm})|$ \\ 
\hline
 & \\[-0.75em]
\ref{probc:minmax}. $\minmaxreach~ \merging{\lambda,\mu}$ &  $\min \max_{v \in S} |\reach(v, \tuple{G,\tcalm})|$\\ 
\hline
 & \\[-0.75em]
\ref{probc:minavg}. $\minavgreach ~\merging{\lambda,\mu}$ &  $\min \sum_{v \in S} |\reach(v, \tuple{G,\tcalm})|$ \\ 
\hline
 & \\[-0.75em]
\ref{probc:maxreach}. $\maxreach~ \merging{\lambda,\mu}$ &  $\max |\bigcup_{v \in S}\reach(v, \tuple{G,\tcalm})|$ \\ 
\hline
 & \\[-0.75em]
\ref{probc:maxmin}. $\maxminreach~ \merging{\lambda,\mu}$ &  $\max \min_{v \in S} |\reach(v, \tuple{G,\tcalm})| $\\ 
\hline
 & \\[-0.75em]
\ref{probc:maxavg}. $\maxavgreach ~\merging{\lambda,\mu}$ &  $ \max \sum_{v \in S} |\reach(v, \tuple{G,\tcalm})|$\\
\hline
\end{tabular}
\caption{Problems~\ref{probc:minreach}~-~\ref{probc:minavg} are minimization 
problems, while Problems~\ref{probc:maxreach}~-~\ref{probc:maxavg} are maximization
problems. If $|S|=1$, then the solution for all minimization problems is the same; 
similarly for the maximization problems.}
\label{tab:prob}
\end{center}
\end{table}
\end{comment}
%%%
%%%

\begin{table}
\begin{center}

\begin{tabular}{ l || l || l }
\hline
Problem & $\merging{\lambda,\mu}$ & $\delaying{(\delta,\kappa)}$ \\
\hline
\hline
 & \\[-0.75em]
 \ref{probc:minreach}. $\minreach$ &  $\min |\bigcup_{v \in S}\reach(v, \tuple{G,\tcalm})|$ &  $\min |\bigcup_{v \in S}\reach(v, \tuple{G,\tcaldk^D})|$\\ 
\hline
 & \\[-0.75em]
\ref{probc:minmax}. $\minmaxreach$ &  $\min \max_{v \in S} |\reach(v, \tuple{G,\tcalm})|$ &  $\min \max_{v \in S} |\reach(v, \tuple{G,\tcaldk^D})|$ \\ 
\hline
 & \\[-0.75em]
\ref{probc:minavg}. $\minavgreach$ &  $\min \sum_{v \in S} |\reach(v, \tuple{G,\tcalm})|$ &  $\min \sum_{v \in S} |\reach(v, \tuple{G,\tcaldk^D})|$ \\ 
\hline
 & \\[-0.75em]
\ref{probc:maxreach}. $\maxreach$ &  $\max |\bigcup_{v \in S}\reach(v, \tuple{G,\tcalm})|$ &  $\max |\bigcup_{v \in S}\reach(v, \tuple{G,\tcaldk^D})|$ \\ 
\hline
 & \\[-0.75em]
\ref{probc:maxmin}. $\maxminreach$ &  $\max \min_{v \in S} |\reach(v, \tuple{G,\tcalm})| $ &  $\max \min_{v \in S} |\reach(v, \tuple{G,\tcaldk^D})|$ \\ 
\hline
 & \\[-0.75em]
\ref{probc:maxavg}. $\maxavgreach$ &  $ \max \sum_{v \in S} |\reach(v, \tuple{G,\tcalm})|$ &  $\max \sum_{v \in S} |\reach(v, \tuple{G,\tcaldk^D})|$ \\
\hline
\end{tabular}
\caption{Problems~\ref{probc:minreach}~-~\ref{probc:minavg} are minimization 
problems, while Problems~\ref{probc:maxreach}~-~\ref{probc:maxavg} are maximization
problems. If $|S|=1$, then the solution for all minimization problems is the same; 
similarly for the maximization problems.}
\label{tab:prob}

\end{center}
\end{table}

\begin{comment}
\argy{do we want the next paragraph?}
Problems~\ref{probc:minreach}~-~\ref{probc:maxavg} care about the reachability set
in the end of the lifetime of the graph. However, in several cases we might care 
about how {\em fast}, or how {\em slow}, a number of vertices are reachable from 
$S$. For this reason we study the \emph{timed} versions of the Problems
\ref{probc:minreach}~-~\ref{probc:maxavg}. In the timed version of a problem, in 
addition to the original input of the problem, we are additionally given a time step
$t^*$ less than the lifetime of the temporal graph, and in the objective the 
reachability set $\reach(\tuple{G,\tcalm},v)$ is replaced by 
$\reach_{t^*}(\tuple{G,\tcalm},v)$ for every $v \in S$. For example, the objective
for $\timed(t^*)~\minreach~\merging{\lambda,\mu}$ is 
$\min |\bigcup_{v \in S}\reach_{t^*}(\tuple{G,\tcalm},v)|$. In general, we will say 
that a merging scheme is {\em optimal} if it optimises the corresponding objective.\\
\end{comment}

\paragraph{\bf \maxsat.} To produce many of our results we use the problem \maxsat.
An instance of \maxsat is a CNF formula $\phi$ with $n$ Boolean variables and $m$ 
clauses, where each clause involves exactly two variables and every variable appears 
in at most three clauses.  The goal is to %find an assignment for the variables that 
maximize the number of satisfied clauses. Without loss of generality we will 
assume that every variable in $\phi$ appears exactly one time as a positive literal 
and at most two times as a negative literal. In~\cite{BK99} it was proven that 
\maxsat is \NP-hard and that it is even hard to approximate better than 1.0005.

%%%%%%%%%%%%%%%%%%%%%%%%%%%%%%%%%%%%%%%
%%%%%%%%%%%%%%%%%%%%%%%%%%%%%%%%%%%%%%%
%%%%%%%%%%%%%%%%%%%%%%%%%%%%%%%%%%%%%%%
%\newpage
\section{Merging: Minimization problems}
\label{sec:min-hard}
In this section we study minimization problems under merging operations. 
To begin with, we prove that \minreach under $\merging{2,\mu}$ is \NP-hard even when $G$ 
is a path with many sources. Then, using this result, we explain how to get \NP-hardness for
any $\lambda \geq 2$.
%\argy{Fix the rest of the paragraph.}
Next, we show how to extend our construction and get a bipartite, planar graph of 
maximum degree 3 and only one source, and thus we prove \NP-hardness for \minreach, \minmaxreach, and \minavgreach. Our next result is \NP-hardness
for  the same set of problems on trees with one source. For this result, we derive
a new construction. 
Although all of our results are presented for undirected graphs, they can be trivially 
extended for directed graphs by simply adding both directions to the edges. However, we can extend them for DAGs (Directed Acyclic Graphs) too. For each one of our constructions we show how to get a DAG.

\subsection{\minreach on paths}
\label{sec:min-path}

%\noindent
\paragraph
{\bf Construction.}
We will reduce from \maxsat considering 
a CNF formula $\phi$
with $n$ Boolean variables and $m$ 
clauses.
The $k$-th clause of $\phi$ will be associated with the time steps $4k, 4k+1$ and
$4k+2$, while the $i$-th variable will be associated with the time steps 
$M+4i, M+4i+1$ and $M+4i+2$, where $M = 4m+4$.
For every clause of $\phi$ we construct a path with nine vertices, 
where the middle vertex is a source. 
Consider the path for the $k$-th clause, that involves the variables $x_i$ and $x_j$.
An example of such a path can be found on Figure~\ref{fig:min-tree-new}. 
The middle piece of the path consists of the vertices $c_k, y_k^l, z_k^l, y_k^r,
z_k^r$; where $c_k$ belongs to a set of sources, e.g. $c_k \in S$. Edge $c_ky_k^l$ has the label $4k$, edges $y_k^lz_k^l$ 
and $c_ky_k^r$ have the label $4k+1$, and edge $y_k^rz_k^r$ has the label $4k+2$. 
The labels on the left and the right pieces of the path depend on the literals of 
the variables of the clause. 
If variable $x_i$ appears in the clause with a positive literal, then we pick 
an arbitrary side of the path, say the left, and add the label $M+4i+1$ to the edge 
$z^l_kw^l_k$ and the label $M+4i+2$ to the edge $w^l_kv^l_k$. 
If $x_i$ appears in the clause with a negative literal, then we add the label 
$M+4i$ to the edge $z^l_kw^l_k$ and the label $M+4i+1$ to the edge $w^l_kv^l_k$.
In order to create a single path, we arbitrarily connect the endpoints of the paths
we created for the clauses; every edge that connects two such paths has label $1$. 
Observe that for the constructed temporal graph, \tgraph,
the following hold: 
\begin{enumerate}
    \item \tgraph has $9m$ vertices and lifetime $4m+4n+6$; 
    \item every vertex is reachable from one of sources; 
    \item at every time step there are at most three edges available; 
    \item every edge has only one label, i.e., it is available only at one time step.  
\end{enumerate}
Clearly, the size of \tgraph is polynomial to the size of $\phi$ and $|S|=m$.
We will ask for a $(2,m+n)$-merging scheme.

\begin{figure}
%\vspace{-1cm}
  \begin{center}
      \includegraphics[scale=0.30]{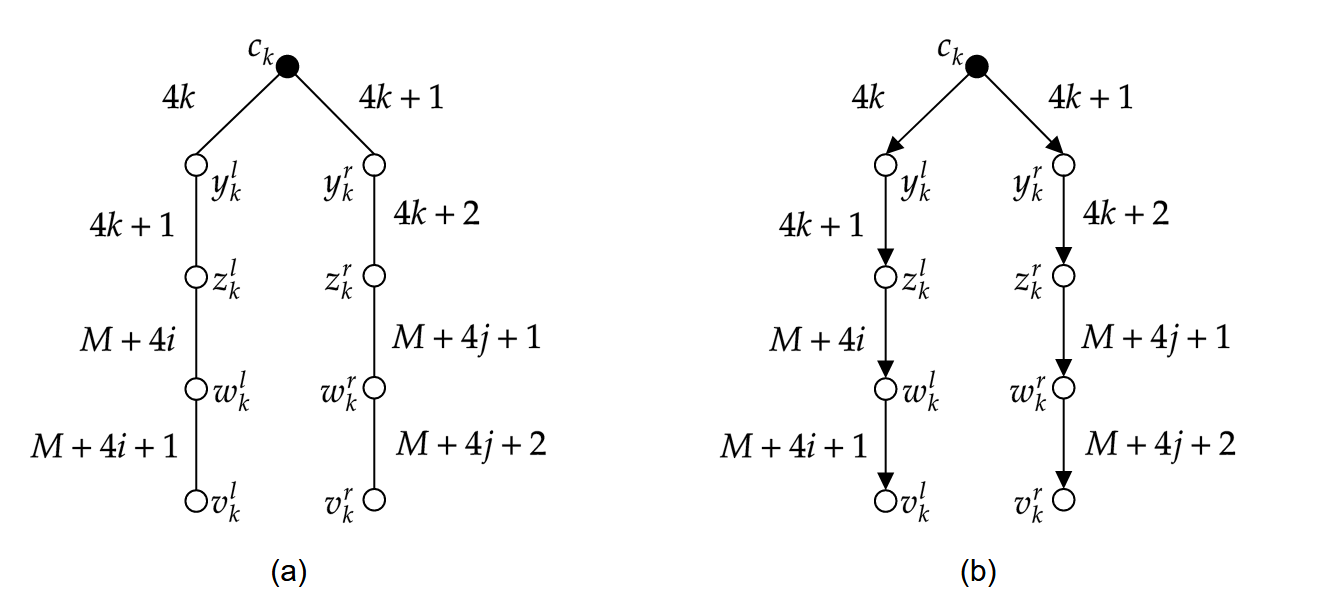}
  \end{center}
    \caption{(a) The gadget for the clause $(\bar{x}_i,x_j)$. The labels on the 
  edges denote the time steps these edges are available; $M = 4m+4$. 
  The solid vertex, $c_k$, is a source, i.e., it belongs to $S$. (b) The gadget for the directed case. The only difference from Subfigure (a) is the addition of the directions that can be used to prove \NP-hardness in DAGs.
  }
    \label{fig:min-tree-new}
\end{figure}

\paragraph{\bf Intuition.} The correctness of our reduction relies on two ideas.
The first idea is that in every subpath, under any $(2,\mu)$-merging scheme, {\em at most}
four vertices are not reachable from $S$. In order to make four vertices not 
reachable within a gadget, the following synergy must happen. Vertex $c_k$ should 
choose which side of the path to ``save''; merging $4k$ with $4k+1$ makes the vertices
$z^l_k, w^l_k, v^l_k$ unreachable; merging $4k+1$ with $4k+2$ makes the vertices
$z^r_k, w^r_k, v^r_k$ unreachable. Observe that only one of the two merges can
happen, since the two merges together are not independent. 
Hence, such a merge makes three vertices of one side unreachable. In order
to make the $v$-vertex of the other side unreachable, one extra merge has to 
happen. This merge will be translated to a truth assignment for a variable, which is
the second idea of our reduction. The merge of $M+4i$ with $M+4i+1$ corresponds to
setting $x_i$ to False and the merge of $M+4i+1$ with $M+4i+2$ to True. 
Note, that $M+4i+1$ can be merged {\em only} one of $M+4i$ and $M+4i+2$, since the merge of $M+4i$ with $M+4i+1$  and  the merge of $M+4i+1$ with $M+4i+2$ are {\em not} independent.

\begin{lemma}
\label{lem:min-path-sound}
If there exists an assignment that satisfies $l$ clauses of $\phi$,
then there exists a $(2,n+m)$-merging scheme such that $3m+l$ vertices are not 
reachable from $S$.
\end{lemma}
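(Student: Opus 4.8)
The plan is to convert the given assignment directly into a $(2,n+m)$-merging scheme and then count, gadget by gadget, how many vertices it renders unreachable. I would use all $n+m$ permitted merges as follows. For every variable $x_i$ I perform one \emph{variable merge} dictated by the assignment: merge $M+4i+1$ with $M+4i+2$ if $x_i$ is True, and merge $M+4i$ with $M+4i+1$ if $x_i$ is False. For every clause $k$ I perform one \emph{clause merge} that ``saves'' one side of the gadget: to save the left side I merge $4k+1$ with $4k+2$, and to save the right side I merge $4k$ with $4k+1$. For a satisfied clause I save the side carrying a literal that the assignment makes true (breaking ties arbitrarily); for an unsatisfied clause I save an arbitrary side. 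This is exactly $m$ clause merges plus $n$ variable merges, each of length $2$.

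First I would check this is a legal scheme, i.e.\ the merges are pairwise independent. The variable merges live in the disjoint windows $\{M+4i,M+4i+1,M+4i+2\}$, the clause merges in the disjoint windows $\{4k,4k+1,4k+2\}$, and since $M=4m+4$ every variable window lies strictly above every clause window; a short index computation shows any two chosen pairs $(a,a+1)$ and $(b,b+1)$ with $a<b$ satisfy $a+1<b$, which is exactly the independence condition for $\lambda=2$. As independent merges commute, the modified graph \tcalm is well defined. I would then argue that reachability decomposes over gadgets: the only edges joining different clause paths carry the label $1$, no merge touches time $1$, and every source $c_k$ first moves at time $\ge 4k\ge 4$, so no temporal path from a source can traverse a connecting edge. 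Hence a vertex of gadget $k$ is reachable from $S$ iff it is reachable from $c_k$, and it suffices to count unreachable vertices within each gadget separately.

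The core is the per-gadget count, which rests on the fact that a strict temporal path cannot use two consecutive edges that, after merging, carry the same label. On the \emph{unsaved} side the clause merge forces the first two edges ($c_ky_k$ and $y_kz_k$) to share a label, so $z,w,v$ on that side all become unreachable, independently of the variable merges. On the \emph{saved} side the clause merge leaves $c_k$ able to reach $y,z,w$ (the arrival time at $z$ is below $M$, so the variable edges can still be taken), and the variable edges $zw$ and $wv$ collapse to a common label \emph{precisely} when the literal on that side is made true — a four-case check of positive/negative literal against a True/False assignment — so the endpoint $v$ is lost exactly in the satisfied case. Thus a satisfied clause contributes $4$ unreachable vertices and an unsatisfied clause contributes $3$, giving $4l+3(m-l)=3m+l$ in total, which proves Lemma~\ref{lem:min-path-sound}.

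The main obstacle I expect is precisely the bookkeeping of this last step: one must verify that the variable merges collapse $zw$ and $wv$ exactly when the corresponding literal is satisfied and \emph{never} cause an unintended collapse elsewhere — in particular on the saved side of an unsatisfied clause (where the assignment direction is ``off by one'' and leaves the two labels distinct) or on a side whose three vertices are already dead. Each of these reduces to comparing the handful of explicit labels involved, so once the four-case table is laid out the remaining verification is routine.
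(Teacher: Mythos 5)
Your proposal is correct and follows essentially the same route as the paper's proof: translate the assignment into one variable merge per variable and one clause merge per clause, check that all $n+m$ $2$-merges are pairwise independent, and count $4$ unreachable vertices per satisfied gadget and $3$ per unsatisfied one. In fact, your choice of clause merge --- ``saving'' the side whose literal is made true, so that its $v$-vertex becomes the fourth casualty via the variable merge --- is the one that actually achieves the stated count of $3m+l$; the paper's text, read literally, merges the clause labels on the \emph{same} side as the variable-induced cut, which would leave only three vertices unreachable in a satisfied gadget, so your more careful bookkeeping quietly repairs a slip in the published argument.
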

\begin{proof}
Given an assignment we produce a merging scheme as follows. If variable $x_i$ is
False, then merge $M+4i$ with $M+4i+1$; if it is True, we merge $M+4i+1$ with 
$M+4i+2$. Then, we consider every path that corresponds to a clause that is
satisfied and we check the side of this path that is no longer reachable from $c_k$, i.e. the side where the path was ``cut'' with respect to reachability from source $c_k$. 
For the $k$-th clause, if there is a cut from the side where the edge
with label $4k$ lies, then we merge $4k$ with $4k+1$; if the cut is from the side 
where the edge $4k+2$ lies, then we merge $4k+1$ with $4k+2$; if there are cuts in
both sides, or there are no cuts then we arbitrarily make one of the two merges. 
It is not hard to check that under this merging scheme: in every gadget that 
corresponds to a satisfied clause we have four vertices not reachable (we have $l$ such clauses, so $4l$ vertices are not reachable), and in
every gadget that corresponds to a non satisfied clause only three vertices are not
reachable (we have $m-l$ such clauses, so $3m - 3l$ vertices are reachable in this type of gadgets). Hence, given an assignment that satisfies $l$ clauses, we have made $m+n$ of 2-merges and there are $3m+l$ of unreachable vertices.
\end{proof}

\begin{lemma}
\label{lem:min-path-correct}
If there exists an optimal $(2,n+m)$-merging scheme such that $3m+l$ vertices are 
not reachable from $S$, then there exists an assignment for $\phi$ that satisfies 
$l$ clauses.
\end{lemma}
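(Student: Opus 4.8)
The plan is to invert the correspondence of Lemma~\ref{lem:min-path-sound}: from a merging scheme that leaves $3m+l$ vertices unreachable I will read off a truth assignment, charging each ``extra'' unreachable vertex (above a baseline of three per clause gadget) to a satisfied clause. First I would argue that reachability decomposes gadget by gadget. The only edges joining two clause paths carry the label $1$, whereas every source $c_k$ first leaves its gadget at time $4k\ge 4>1$; since merges only increase labels, no temporal path from any source can traverse a connecting edge. Hence a vertex of the $k$-th gadget is reachable from $S$ if and only if it is reachable from $c_k$ inside that gadget, so it suffices to analyse each gadget in isolation.

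The heart of the argument is a single-gadget bound. Working along each side path (an induced path, so reachability is monotone: if an inner vertex is reachable, so is every vertex between it and $c_k$, i.e.\ the unreachable vertices on a side form a suffix of $z,w,v$), I would establish four facts. (i) $c_k,y_k^l,y_k^r$ are always reachable, since the edge leaving the source is the first step of a path and can always be used. (ii) $z_k^l$ becomes unreachable only if $c_ky_k^l$ is lifted to meet $y_k^lz_k^l$, which in a scheme of independent $2$-merges can occur only by merging $4k$ with $4k+1$; symmetrically $z_k^r$ requires merging $4k+1$ with $4k+2$; these two merges share $E_{4k+1}$, so at most one of them is performed. (iii) On the side that is \emph{not} killed by such a clause-merge, the arrival time at its $z$-vertex stays below $M=4m+4$, while the first label placed on that side is at least $M+4>M$; therefore its $z$- and $w$-vertices remain reachable and only its terminal $v$-vertex can be lost. (iv) That $v$-vertex is lost exactly when the label of its last edge is lifted to meet its predecessor, which by the placement of the labels $M+4i,M+4i+1,M+4i+2$ happens precisely when the variable merge encoding the \emph{satisfying} truth value of that side's literal is performed. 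Combining (i)--(iv), a gadget has at most $3+1=4$ unreachable vertices, and it attains $4$ only by using one clause-merge together with the variable merge that satisfies the other side's literal.

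Finally I would assemble the assignment and count. For each variable $x_i$ the two candidate merges, ``merge $M+4i$ with $M+4i+1$'' (False) and ``merge $M+4i+1$ with $M+4i+2$'' (True), are not independent, so the scheme performs at most one of them; the merges present thus define a consistent partial assignment, which I extend arbitrarily. By (iv), every gadget carrying $4$ unreachable vertices corresponds to a clause satisfied by this assignment. Writing $u_k\le 4$ for the number of unreachable vertices in gadget $k$ and $a$ for the number of gadgets with $u_k=4$, the bound $u_k\le 4$ gives
\[
3m+l=\sum_{k}u_k\le 4a+3(m-a)=3m+a,
\]
so $a\ge l$ and the assignment satisfies at least $l$ clauses, as required.

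The main obstacle is steps (iii)--(iv): I must rule out any clever combination of merges that kills a $z$- or $w$-vertex on the \emph{saved} side of a gadget, and confirm that the \emph{only} merge able to cost a $v$-vertex is exactly the one encoding the satisfying literal. Both rest on the deliberate separation $M=4m+4$ between the clause time-window (all labels $<M$) and the variable time-window (all labels $\ge M+4$), together with a careful check of which edges each independent $2$-merge can shift; once these are pinned down, the counting above closes the argument.
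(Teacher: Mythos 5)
Your proof is correct and follows essentially the same route as the paper's: bound each clause gadget at four unreachable vertices (three from the side killed by the merge involving $4k+1$, plus at most one further $v$-vertex killed by a variable merge), read the truth assignment off the variable merges (whose mutual non-independence gives consistency), and count. Your version is in fact slightly cleaner in that it only needs the upper bound $u_k \le 3 + [\text{clause } k \text{ satisfied}]$ together with the explicit inequality $3m+l \le 3m+a$, and so never has to invoke optimality of the scheme, whereas the paper first argues that an optimal scheme performs exactly one merge per clause triple and per variable triple.
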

\begin{proof}
Firstly, observe that for the time steps $4k-1, 4k, 4k+1, 4k+2$, and $4k+3$, for 
every $k \in [m]$, no more than one 2-merge is necessary. This is because at time 
steps $4k-1$ and $4k+3$ there are no edges available, hence any merge that includes
these time steps does not change the set of reachable vertices. In addition, this
merge has to involve time step $4k+1$, else it is meaningless as we argued above.
The same holds for the time steps $M+4i-1, M+4i, M+4i+1, M+4i+2$, and $M+4i+3$ for every $i \in [n]$. Hence, in any optimal merging scheme there is only one 2-merge 
for every triple $4k, 4k+1, 4k+2$, where $k \in [m]$, and only one 2-merge for 
every triple $M+4i, M+4i+1, M+4i+2$, where $i \in [n]$. So, under any optimal merging
scheme, for every $k$, in the $k$-th gadget at least three vertices are not 
reachable (due to the merge that involves $4k+1$). In addition, at most four vertices 
are not reachable due to a, potential, extra merge that makes a $v$-vertex not 
reachable. 

In order to create a truth assignment for the variables of $\phi$, we proceed as
follows. Consider the $k$-th gadget and assume that involves the labels $M+4i$, $M+4i+1$, and $M+4i+2$.
If $M+4i+1$ is merged with $M+4i$, then we set $x_i$ to False. 
If $M+4i+1$ is merged with $M+4i+2$, then we set $x_i$ to True.
By the construction of the gadget, observe that $x_i$ satisfies the $k$-th clause. Any variables with undefined value, we set them to
True. So, if there are $l$ gadgets with four unreachable vertices, the constructed
assignment satisfies $l$ clauses of $\phi$.  To complete our proof, we need to argue
that the truth assignment for the variables is well defined, i.e. we did not set 
$x_i$ both to True and False. For contradiction assume that the value of $x_i$ is 
not well defined. This means that $M+4i+1$ is merged with $M+4i$, and that $M+4i+1$ 
is merged with $M+4i+2$ as well; a contradiction.
\end{proof}

The combination of Lemmas~\ref{lem:min-path-sound} and~\ref{lem:min-path-correct} already yield \NP-hardness for \minreach under $\merging{2,\mu}$, when $\mu$ is part of the input. 
\begin{corollary}
\minreach is \NP-hard under $\merging{2,\mu}$.
\end{corollary}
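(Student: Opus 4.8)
The plan is to read the corollary as the statement that the construction of Section~\ref{sec:min-path}, together with Lemmas~\ref{lem:min-path-sound} and~\ref{lem:min-path-correct}, is a polynomial-time reduction from \maxsat to the decision version of \minreach under $\merging{2,\mu}$. First I would fix the decision version: given the temporal graph \tgraph, sources $S$, parameters $\lambda=2$ and $\mu=n+m$, and an integer target, decide whether there is a $(2,n+m)$-merging scheme whose reachability set $|\bigcup_{v\in S}\reach(v,\tuple{G,\tcalm})|$ has size at most the target. Since the construction has size polynomial in $|\phi|$ and is clearly computable in polynomial time, establishing the correctness of this reduction suffices.

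The observation that glues the two lemmas together is that the constructed graph has a fixed number of vertices, namely $9m$ by property~1 of the construction; hence every vertex is either reachable or unreachable from $S$, so minimizing the size of the reachable set is exactly equivalent to maximizing the number of \emph{unreachable} vertices. I would therefore phrase everything in terms of a single quantity. Let $\mathrm{OPT}_{\mathrm{SAT}}$ be the maximum number of clauses of $\phi$ satisfiable by any assignment, and let $U^\star$ be the maximum number of vertices that any $(2,n+m)$-merging scheme can make unreachable from $S$. Lemma~\ref{lem:min-path-sound}, applied with $l=\mathrm{OPT}_{\mathrm{SAT}}$, yields a scheme making $3m+\mathrm{OPT}_{\mathrm{SAT}}$ vertices unreachable, so $U^\star \geq 3m+\mathrm{OPT}_{\mathrm{SAT}}$. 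In the other direction, $U^\star$ is attained by some scheme, which is therefore optimal for \minreach; Lemma~\ref{lem:min-path-correct} applied to that scheme produces an assignment satisfying $U^\star-3m$ clauses, giving $\mathrm{OPT}_{\mathrm{SAT}} \geq U^\star-3m$. The two bounds combine into the exact identity $U^\star = 3m+\mathrm{OPT}_{\mathrm{SAT}}$.

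It then follows that the minimum achievable size of the reachability set equals $9m-U^\star = 6m-\mathrm{OPT}_{\mathrm{SAT}}$. Hence, setting the target to $6m-l$ turns the \minreach instance into a yes-instance if and only if $\phi$ admits an assignment satisfying at least $l$ clauses. This is precisely a polynomial reduction from the decision version of \maxsat, so the \NP-hardness established in~\cite{BK99} transfers, and \minreach is \NP-hard under $\merging{2,\mu}$ when $\mu$ is part of the input.

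The step I expect to require the most care is the direction coming from Lemma~\ref{lem:min-path-correct}: because that lemma is phrased for an \emph{optimal} merging scheme, I must be careful to compare like with like, ensuring that $U^\star$ is realized by an optimal scheme so that the lemma yields an assignment matching that exact count rather than merely a lower bound. Once the soundness bound and the completeness bound are seen to meet exactly at $3m+\mathrm{OPT}_{\mathrm{SAT}}$, the remaining ingredients (the fixed vertex count $9m$ and the polynomiality of the construction) are routine bookkeeping.
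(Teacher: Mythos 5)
Your argument is correct and fills in exactly the bookkeeping that the paper leaves implicit when it states that the combination of Lemmas~\ref{lem:min-path-sound} and~\ref{lem:min-path-correct} yields the corollary: the $9m$-vertex count, the equivalence of minimizing reachability with maximizing unreachability, and the matching bounds $U^\star = 3m+\mathrm{OPT}_{\mathrm{SAT}}$. This is the same approach as the paper's, just written out in full, and your care about applying Lemma~\ref{lem:min-path-correct} only to an optimal scheme is exactly the right point to watch.
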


In the next theorem we explain how to get stronger \NP-hardness results for \minreach for a number of different parameters.
\begin{theorem}
\label{thm:main-min-hard}
\minreach under $\merging{\lambda,\mu}$ is \NP-hard for any $\lambda\geq 2$, even 
when $G$ is a path, and the following constraints hold:
\begin{enumerate}
\item every edge is available only at one time step;
\label{enum:one}
\item at any time step there are at most three edges available;
\item the merging scheme is maximal.
\label{enum:three}
\end{enumerate}
\end{theorem}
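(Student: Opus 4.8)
My plan is to reduce the general-$\lambda$ case back to the $\lambda=2$ construction of Section~\ref{sec:min-path}, so that Lemmas~\ref{lem:min-path-sound} and~\ref{lem:min-path-correct} can be reused almost verbatim. Fix any $\lambda\ge 2$. Starting from the $\merging{2,\mu}$ instance \tgraph built from a \maxsat formula $\phi$, I would keep the graph $G$ (still a path) and apply a single \emph{spread} operation to the schedule \tcal: replace every label $i$ by $(\lambda-1)\,i$. Since $\lambda$ is a constant this scales the lifetime by a factor $\lambda-1$, so the instance stays polynomial, and it does not touch $G$, so the resulting temporal graph is again a path. Constraints~\ref{enum:one} and~\ref{enum:three} come for free: spreading is injective on time steps, so every edge still carries exactly one label, each nonempty step still hosts at most three edges (the edges of the original step), and all intermediate steps are empty.

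\textbf{The spacing argument.} The crux is to show that, after spreading, a single $\lambda$-merge behaves exactly like a $2$-merge of the original instance. A $\lambda$-merge acts on the $\lambda$ consecutive sets $E_i,\dots,E_{i+\lambda-1}$, i.e.\ on a window of \emph{range} $\lambda-1$, moving everything to time $i+\lambda-1$. After spreading, any two consecutive ``meaningful'' labels within a triple (the $4k,4k{+}1,4k{+}2$ and $M{+}4i,M{+}4i{+}1,M{+}4i{+}2$ groups) lie at distance exactly $\lambda-1$, while labels from different triples are separated by at least $2(\lambda-1)>\lambda-1$. The key observation I would prove is that the \emph{only} way a $\lambda$-merge can cut a temporal path on $G$ is by capturing a consecutive pair inside one triple: a window whose range is $\lambda-1$ can batch together two labels at distance $\lambda-1$ only when it is aligned with them, can never capture three (distance $2(\lambda-1)$), and can never span two triples. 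Moreover, a window capturing a single label at position $p$ but not its successor at $p+(\lambda-1)$ must end strictly before $p+(\lambda-1)$, so it moves that label forward by less than the gap to the next edge and overtakes nothing; such merges are reachability-neutral. Consequently the cutting $\lambda$-merges are in bijection with the $2$-merges of the original instance and have the identical effect on $\reach(S,\cdot)$, so the maximum number of unreachable vertices is again $3m+(\text{max.\ satisfiable clauses})$, directly by Lemmas~\ref{lem:min-path-sound} and~\ref{lem:min-path-correct}.

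\textbf{Maximality (constraint on maximal schemes).} Since every merge only shrinks the reachability set, the minimiser never loses by being maximal; the work is to show maximal schemes attain the same optimum. Here I use that each triple admits exactly two cutting windows, which share the middle label and are therefore not independent, so any scheme places at most one cutting merge per triple, giving at most $n+m$ cutting merges in total. For the ``yes'' direction I take an optimal assignment, place the $n+m$ meaningful merges as in Lemma~\ref{lem:min-path-sound}, and then extend to a maximal scheme: every further feasible merge is necessarily neutral (the complementary cutting window of each triple is non-independent with the one already placed, and all remaining room lies in empty regions), so the extension leaves the count of unreachable vertices at exactly $3m+l$. For the ``no'' direction, Lemma~\ref{lem:min-path-correct} bounds \emph{every} scheme, maximal or not, by $3m+\text{maxsat}$ unreachable vertices. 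Hence the minimum reachability over \emph{maximal} $\lambda$-merging schemes encodes $\text{maxsat}$, which yields \NP-hardness.

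\textbf{Main obstacle.} The spacing observation is the linchpin and does most of the work, but the delicate point is the maximality bookkeeping: I must argue that forcing maximality neither introduces an unwanted cutting merge (ruled out because the two per-triple windows overlap and distinct triples are separated by more than $\lambda-1$) nor blocks an intended one (ruled out because, once one cutting window per triple is placed, the complementary window is non-independent and the residual room admits only neutral merges). Verifying cleanly that ``extend to a maximal scheme'' can always be done using neutral merges only—so that the unreachable count is preserved—is where I expect to spend the most care.
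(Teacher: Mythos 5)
Your proposal is correct and follows essentially the same route as the paper: spread the time labels of the $\lambda=2$ path instance so that only full-length, aligned $\lambda$-merges can batch two nonempty edge sets, then reuse Lemmas~\ref{lem:min-path-sound} and~\ref{lem:min-path-correct}, with maximality handled by padding with reachability-neutral merges. The only substantive difference is that you space consecutive labels by $\lambda-1$ while the paper places $E_j$ at time $\lambda j-2$ (spacing $\lambda$, under which a window of $\lambda$ consecutive steps could never capture two nonempty sets), so your spacing argument is the arithmetically correct version of the paper's sketch and is worked out in more detail.
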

\begin{proof}
We start from the temporal graph $\tuple{G, \tcal}$ constructed above. Let $E_1. E_2, \ldots, E_{M+4n+2}$ be the edge-sets of $\tuple{G, \tcal}$. We will construct a new temporal graph $\tuple{G, \tcal'}$ as follows with edge-sets $E'_1, E'_2, \ldots$ where $E'_{k} = E_j$ if $k = \lambda \cdot j - 2$ for some $j \in \{1, 2, \ldots, M+4n+2 \}$, and $E'_{k} = \emptyset$ otherwise. Observe that in  $\tuple{G, \tcal'}$, any $\lambda'$-merge with $\lambda' < \lambda$ does not affect the set reachable from $S$ since it can merge at most one time step that contains edges from $G$. Hence, we can replace the 2-merges with $\lambda$-merges in  Lemmas~\ref{lem:min-path-sound} and~\ref{lem:min-path-correct} and get the proofs exactly in the same way as before. Again, observe that again at most three edges per time step are available. 

In addition, observe that in the proofs Lemmas~\ref{lem:min-path-sound} and~\ref{lem:min-path-correct} we have not used anywhere the fact that at most $\mu$ of $2$-merges were allowed. In other words, the hardness does not come from constraining the number of allowed $2$-merges, but from the way the 2-merges can be placed. Hence, in Lemma~\ref{lem:min-path-sound} we can arbitrarily extend the produced merging scheme to a maximal merging scheme. In Lemma~\ref{lem:min-path-correct},  without loss of generality
we can assume that the merging scheme is maximal, since this does not affect the correctness of the lemma.
\end{proof}

%%%%%%%%%%%%%%%%%%%%%%%%%%%%%%%%%%%%%%
\subsection{Graphs with a single source}
\label{sec:min-one}
Next, we explain how to get \NP-hardness for the case with a single source, e.g. where $|S|=1$. 
This result immediately implies \NP-hardness for \minmaxreach and \minavgreach
under $\merging{\lambda,\mu}$, since when $|S|=1$ these problems coincide with 
$\minreach$.
It is not hard to get this result, given the previous construction.
We can simply add a vertex $c_0$ in the constructed graph that is the only source and 
it is connected with every vertex that used to be a source, so $c_0$ is connected with all 
the $c$-vertices from the previous construction, with an edge available at time step 1. 
Since the next time step that an edge exists in is time step 4, this means
that under {\em any} $(2,\mu)$-merging scheme, at time step 2 every vertex of $S$ has been 
reached by $c_0$. Hence, The \NP-hardness follows from the previous construction. However, 
this can be strengthen more by the design of a graph where every vertex has a  constant degree. 
Instead, we can modify \tgraph as shown in Figure~\ref{fig:min-one} and get a tree of maximum degree 3.

\begin{theorem}
\label{thm:one-min-hard}
\minreach, \minmaxreach, and \minavgreach under $\merging{\lambda, \mu}$ are
\NP-hard for any $\lambda \geq 2$ even when:
\begin{enumerate}
\item there exists only one source;
\item $G$ is a tree of maximum degree three and constant pathwidth;
\item every edge is available only at one time step;
\item at any time step there are at most three edges available;
\item the merging scheme is maximal.
\end{enumerate}
\end{theorem}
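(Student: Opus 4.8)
The plan is to bootstrap from the path construction of Section~\ref{sec:min-path} and Theorem~\ref{thm:main-min-hard}, replacing the collection of $m$ sources $c_1,\dots,c_m$ by a single source $c_0$ that \emph{broadcasts} reachability to every $c_k$ before any clause gadget becomes active. Since for $|S|=1$ the objectives \minreach, \minmaxreach and \minavgreach coincide, it suffices to establish the result for \minreach. The naive single source — joining $c_0$ to every $c_k$ by an edge with label $1$ — is robust (a single $2$-merge can only postpone these edges from time $1$ to time $2$, still before the first gadget edge, and the independence condition forbids postponing any edge twice) but gives $c_0$ degree $m$. To meet the maximum-degree-three and constant-pathwidth requirements I would instead route the broadcast along a \emph{caterpillar}: a spine $c_0=a_0,a_1,\dots,a_m$ together with pendant ``branch'' edges $a_kc_k$. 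Each internal $a_k$ then has degree three ($a_{k-1},a_{k+1},c_k$) and each $c_k$ has degree three (its branch plus the two gadget edges $c_ky_k^l,c_ky_k^r$); the whole graph is a tree, and, being a path with pendant gadgets of bounded size, it has constant pathwidth.

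Next comes the timing. A degree-three spine cannot broadcast in $O(1)$ steps, so I would place the whole broadcast in an initial window: give the spine edge $a_{k-1}a_k$ the label $2k-1$ and the branch edge $a_kc_k$ the label $2k+1$, so that $a_k$ is reached at time $2k-1$ and $c_k$ at time $2k+1$, using at most two edges per time step. I would then shift the entire clause/variable construction of Section~\ref{sec:min-path} by an offset $P>2m+2$, chosen large enough to leave an empty buffer longer than $\lambda$, so that clause $k$ uses times $P+4k,P+4k+1,P+4k+2$ and the variables use times $P+M+4i,\dots$. The crucial lemma to prove is \emph{robustness}: under \emph{any} valid $\merging{\lambda,\mu}$ scheme (maximal or not), $c_0$ still reaches every $c_k$ strictly before time $P+4k$. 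The argument is that merges only postpone labels, and the independence condition forces any single edge to be postponed by at most one step (for $\lambda\ge 2$, by at most $\lambda-1$ after the empty-slot padding of Theorem~\ref{thm:main-min-hard}, since such a window absorbs at most one real label); as consecutive spine labels, and each spine arrival versus its branch label, differ by $2$, a one-step postponement can never create a collision, so every root-to-$c_k$ temporal path survives and, by induction, $c_k$ is reached by time $2k+2<P$.

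Given robustness, the remainder is accounting. Every broadcast vertex ($c_0$, the $a_k$, and every $c_k$) is always reachable, so it contributes a fixed amount to \reach, and the assignment-dependent part of the objective lies entirely inside the clause gadgets, whose behaviour — now seeded by a \emph{reached} $c_k$ rather than a source $c_k$, and with all gadget edges occurring after time $2k+2$ — is identical to Section~\ref{sec:min-path} after the time shift. I would therefore replay Lemmas~\ref{lem:min-path-sound} and~\ref{lem:min-path-correct} on the shifted times: the only merges that can decrease reachability are the per-gadget merges around $P+4k+1$ and the per-variable merges around $P+M+4i+1$, the broadcast merges being harmless, and an assignment satisfying $l$ clauses corresponds exactly to a scheme leaving $3m+l$ gadget vertices unreachable. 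Maximality is handled as in Theorem~\ref{thm:main-min-hard}: in the soundness direction extend the scheme to a maximal one (the extra forced merges fall on the broadcast or on empty slots and change nothing), and in the correctness direction assume maximality without loss of generality. The extension to every $\lambda\ge 2$ is the same padding argument as in Theorem~\ref{thm:main-min-hard}.

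The step I expect to be the main obstacle is the robustness lemma, because it must hold against an \emph{adversarial and possibly maximal} scheme: unlike the degree-$m$ star, the caterpillar reaches a distant $c_k$ only through a long fragile temporal path, and a single badly placed merge collapsing two consecutive spine (or spine/branch) labels would sever an entire subtree and make a whole gadget — all nine of its vertices — unreachable, letting the minimiser make strictly more than $3m+l^\star$ vertices unreachable and destroying the correspondence with \maxsat. The construction survives precisely because the independence constraint caps the postponement of any one label at a single step while the labels are spaced two apart; the delicate part is verifying that this spacing, together with the empty buffer before the gadget phase, defeats \emph{every} admissible (and maximal) scheme, and re-checking it under the $\lambda$-merge padding.
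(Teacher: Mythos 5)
Your proposal follows essentially the same route as the paper's proof: the paper also attaches a spine $c'_0,c'_1,\ldots,c'_m$ with pendant edges $c'_ic_i$ to the old sources, spaces the broadcast labels (at multiples of $\lambda$ there, versus your gaps of $2$ plus padding) so that no admissible merging scheme can postpone a label past its successor and break the broadcast, and then replays the shifted path reduction on the residual instance. The robustness step you flag as the main obstacle is exactly the paper's ``crucial observation,'' and your spacing argument resolves it the same way.
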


\begin{figure}%[ht]
\begin{center}
  \includegraphics[scale=0.35]{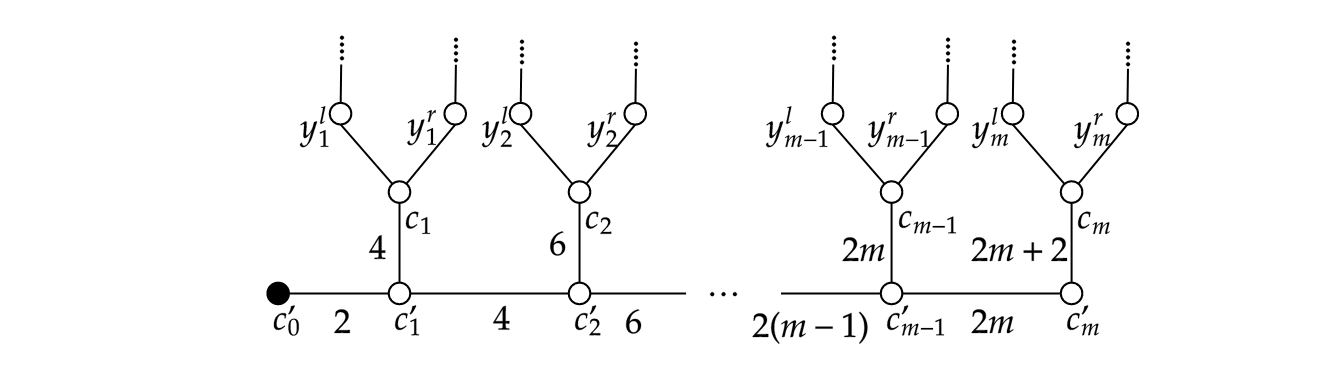}
  \caption{The construction for graphs with one source with $\lambda=2$. }
      \label{fig:min-one}
  \end{center}
\end{figure}

\begin{proof}
We will first prove the theorem for \minreach. Then the claims for \minmaxreach and \minavgreach  will immediately follow as all three objectives coincide when we have only one source.

We start from the instance used in the proof of Theorem~\ref{thm:main-min-hard}. 
Let $\tuple{G,\tcal}$ be the temporal graph constructed for the proof of Theorem~\ref{thm:main-min-hard} and let $E_1, E_2, \ldots, E_k$ denote its edge sets.
Recall, the instance there consists of $m$ paths where path $k$ has the source $c_k$. We will create a temporal graph $\tuple{G',\tcal'}$ where $G$ is a tree of maximum degree 3. Firstly, we create a path that consists of the vertices $c'_0, c'_1, \ldots, c'_m$ where $c'_0$ and $c'_m$ are the ends of the path. In addition, we will add the edges $c'_ic_i$ for every $i \in \{1, 2, \ldots, m\}$. Next we define the edge sets $E_1, E_2, \ldots, E_{\lambda \cdot (m+1) +k}$. 
\begin{itemize}
    \item If $t = \lambda$, then $E'_t = \{c'_0c'_1\}$.
    \item If $t = \lambda \cdot i$ where $i \in \{2, \ldots, m\}$, then $E'_t = \{c'_{i-1}c'_i, c'_{i-1}c_{i-1}\}$. 
    \item If $t = \lambda \cdot (m+1)$, then $E'_t = \{c'_mc_m\}$.
    \item If $t = \lambda \cdot (m+1) + j $ where $j \in \{1, \ldots, k\}$, then $E'_t = E_j$.
    \item In every other case $E_t = \emptyset$.
\end{itemize}
The vertex $c'_0$ will be the unique source of the constructed graph.
The crucial observation is that under {\em any} merging scheme that uses $\lambda$-merges, at time step $2m+\lambda$ every vertex $c_i$ for $i \in [m]$ will be reached. Hence, at time step $2m+\lambda$, under any $(\lambda,\mu)$-merging scheme, we get an instance that it is equivalent, with respect to reachability from this time step and on, to the instance of the previous
section. Then, we can use extactly the same arguments to derive the \NP-hardness. 
\end{proof}

\section{Merging: Maximization problems}
\label{sec:max-hard}
In this section we prove \NP-hardness for maximization problems.
Before we proceed with the exposition of the results though, we should 
discuss some issues about maximization reachability problems and merging. 
Any merge weakly decreases the reachability set from the sources. Hence, 
while in the minimization problems, in principle, we would like to perform 
as many merges as possible, for maximization problems we would like to 
perform the minimum number of merges that are allowed. In addition, the 
reachability set weakly decreases with $\lambda$, i.e., the size of the merge. 
Hence, for maximization problems it is better to do the smallest merge possible, 
i.e., perform only 2-merges. For this reason, if we get \NP-hardness for
$(2,\mu)$-merging schemes, then we can immediately conclude that finding
the optimal $(\lambda,\mu)$-merging scheme is \NP-hard, for any $\lambda \geq 2$.
So, for maximization problems, we {\em require} that {\em at least} $\mu$ 
$\lambda$-merges need to happen. This is motivated by distribution networks;
the use of the network comes at a cost, thus we would like to use 
the network as few times as possible.

Again, we prove our results by reducing from \maxsat. This time though
we need to be more careful; in order to make our reduction valid, we 
should not allow for ``dummy'' merges, i.e., merges that do not change
the reachable vertices from the sources. 
As before, we first prove \NP-hardness for paths with multiple sources. 
Then, we modify our reduction to get \NP-hardness for 
graphs with one source and we provide a reduction for trees with just 
one source.

\subsection{\maxreach on paths}
\label{sec:max-path}
\paragraph{\bf Construction.}
We will reduce from \maxsat.
Every variable $x_i$ of $\phi$ will be associated with the time
steps $3i-2, 3i-1$, and $3i$. For every variable we create a path of length 5 with ends the vertices $y^l_i$ and $y^r_i$; this path is depicted at Figure~\ref{fig:max-reach-path}(b). In this path, the edges at the ends of the paths have labels $3i$ and the rest of the edges have label $3i+1$. The paths that are created from variables will be termed {\em v-paths}. Both ends of every $v$-path are sources. For every clause we create a path of length 4.  So, if the $k$-th clause involves the variables $x_i$ and $x_j$, we construct a path with ends the vertices $v_k^l$ and $v_k^r$, and middle the vertex $c_k$, which will be termed $c$-vertex. The variable $x_i$ will be related to the sub-path between $v_k^l$ and $c_k$ and the variable $x_j$ will be related to the sub-path between $v_k^r$ and $c_k$. If $x_i$ appears with a positive literal in the clause, then the labels on the two edges that connect $v_k^r$ and $c_k$ are $3i-2$ and $3i-1$ respectively; else the
labels are $3i-1$ and $3i$. 
The top side of Figure~\ref{fig:max-reach-path}(a) depicts the path for the clause $(\bar{x}_i, x_j)$. 
 These paths will be termed {\em c-paths}. Both ends of every c-path are sources. To create one path, we connect the constructed paths through their endpoints. Namely, we set $y^r_i = y^l_{i+1}$ for every $i \in [n-1]$, we set $y_n^r=v_1^l$, and we set $v_k^r = v^l_{k=1}$ for every $k \in [m-1]$.
 
 The temporal graph \tgraph we have constructed has $5n+4m+1$, lifetime $3n+1$, and at any time step at most four edges available: if $t = 3i-2$, then there are four edges; if $t = 3i-1$, then there are three edges; and if $t = 3i$, then there are four edges.

\begin{figure}[ht]
\begin{center}
  \includegraphics[scale=0.35]{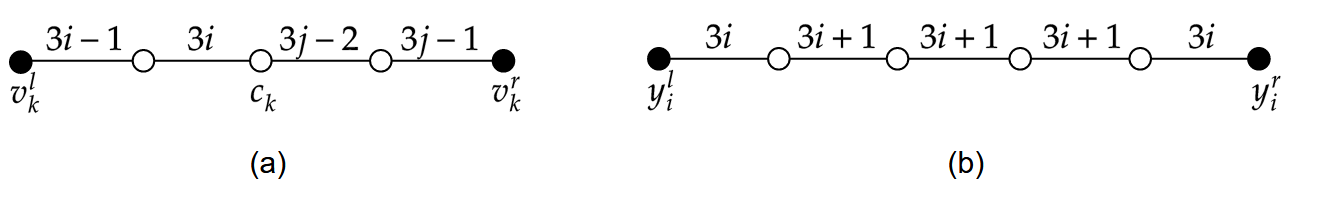}
  \caption{Gadgets for \maxreach on paths. Subfigure (a): c-path for the
  clause $(\bar{x}_i,x_j)$. Subfigure (b): v-path for variable $x_i$. }
  \label{fig:max-reach-path}
  \end{center}
%  \vspace{-0.5cm}
\end{figure}
\paragraph{\bf Intuition.} The labels on the v-paths guarantee that there exists a $(2,n)$-merging scheme with the following properties: 
\begin{itemize}
    \item it maximizes the number of reachable vertices;
    \item it does not merge time step $3i$ with $3i+1$ for any $i \in [n]$.
\end{itemize} 
This guarantees two properties. Firstly, it guarantees that there exists an optimal $(2,n)$-merging scheme where all the vertices, except the $c$-vertices, are reachable from $S$. Hence, any optimal merging scheme, i.e., a merging scheme that maximizes the number of reachable vertices from $S$, {\em has} to maximize the number of reachable $c$-vertices. Second, given the subset of $c$-vertices that are reachable under the produced merging scheme, we can easily deduce a truth assignment that satisfies the clauses that correspond to the reachable $c$-vertices.

\begin{lemma}
\label{lem:max-path-sound}
If there exists an assignment that satisfies $l$ clauses of $\phi$, then there exists a $(2,n)$-merging scheme such that $5n+3m+1+l$ vertices are reachable from $S$.
\end{lemma}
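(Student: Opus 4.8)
The plan is to read the merging scheme off the satisfying assignment and then verify, gadget by gadget, that exactly the claimed vertices become reachable from $S$. Given an assignment satisfying $l$ clauses, for each variable $x_i$ I would perform a single $2$-merge inside the triple $3i-2,3i-1,3i$: if $x_i$ is set to \emph{True} I merge $3i-1$ with $3i$, and if $x_i$ is set to \emph{False} I merge $3i-2$ with $3i-1$. This yields $n$ merges, and they are pairwise independent: the latest time step touched by the merge of $x_i$ is at most $3i$, while the earliest touched by the merge of $x_{i+1}$ is at least $3i+1$, so the independence condition is met. Hence this is a legitimate $(2,n)$-merging scheme, and since it uses exactly $n$ merges it also satisfies the ``at least $\mu$ merges'' requirement imposed on maximisation objectives. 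The decisive feature of this choice is that it \emph{never} merges $3i$ with $3i+1$.

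The next step is to argue that every non-$c$ vertex stays reachable. For the $v$-path of $x_i$, both admissible merges leave the two end edges at label $3i$ and the three interior edges at label $3i+1$; thus from the source $y^l_i$ one obtains a strict path of length two reaching the two interior vertices on its side, and symmetrically from $y^r_i$ the two on the other side, so all four interior vertices are reachable. This is exactly where avoiding the merge of $3i$ with $3i+1$ is essential: that merge would equalise all five labels of the $v$-path and strand its two middle vertices. The two interior vertices $a_k,b_k$ of each $c$-path are reachable trivially, each via the single edge from its adjacent source ($v^l_k$ or $v^r_k$), and the sources are reachable by convention. Accounting for the identifications used to glue the gadgets into a single path, this covers all $5n+3m+1$ non-$c$ vertices.

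Finally I would show that $c_k$ is reachable if and only if clause $k$ is satisfied, contributing exactly $l$ further reachable vertices. For a literal of $x_i$, the two edges of the corresponding sub-path carry labels $3i-2,3i-1$ (positive occurrence) or $3i-1,3i$ (negative occurrence), ordered to increase towards $c_k$. A direct four-way case check shows that, after the merge encoding the truth value, the sub-path remains strictly increasing towards $c_k$ precisely when the literal is satisfied — e.g. a positive occurrence survives exactly when we merged $3i-1$ with $3i$, i.e. when $x_i$ is True, and collapses to two equal labels otherwise — and the same global merge consistently realises the value at every occurrence of $x_i$. Since both endpoints of the $c$-path are sources, $c_k$ is reachable iff at least one of its two literals is satisfied; I would also note that no route from a neighbouring gadget can reach $c_k$, as the only edges incident to $a_k,b_k$ are the sub-path edges themselves. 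Summing gives $5n+3m+1$ non-$c$ vertices plus exactly $l$ reachable $c$-vertices, i.e. $5n+3m+1+l$. The main obstacle is this last step: making the positive/negative against True/False case analysis line up so that one per-variable merge simultaneously preserves all $v$-paths and enforces ``satisfied $\Leftrightarrow$ reachable'' on every $c$-path.
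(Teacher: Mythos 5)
Your proposal is correct and follows essentially the same route as the paper: the identical per-variable merging scheme (True $\mapsto$ merge $3i-1$ with $3i$, False $\mapsto$ merge $3i-2$ with $3i-1$), the observation that all non-$c$ vertices remain reachable, and the case analysis showing that the sub-path toward $c_k$ stays strictly increasing exactly when the corresponding literal is satisfied. The extra details you supply (independence of the $n$ merges, the ``only if'' direction for $c_k$) are sound but not needed beyond what the paper's own argument establishes.
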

%\begin{comment}
\begin{proof}
Given an assignment for the variables of $\phi$ we produce a $(2,n)$-merging scheme as follows. If variable $x_i$ is false, then we merge $3i-2$ with $3i-1$; else we merge $3i-1$ with $3i$. Observe, all the vertices in the v-paths are  reachable under this merging scheme. Consider now the c-paths. Observe that all the vertices except the middle ones, the $c$-vertices, in these paths are reachable. Hence, we have to argue only about the $c$-vertices of \tgraph. Take a c-path that corresponds to a satisfiable clause; assume that this is the $k$-th clause that is satisfied by the variable $x_i$. Then, observe that the $c$-vertex of this path is reachable by the vertex whose side is related to $x_i$, say that it is $v_k^r$. If $x_i$ appears as a positive literal, then the path from $v_k^r$ to $c_k$ uses the labels $3i-2$ and $3i-1$, while the merging scheme merges $3i-1$ with $3i$. 
If $x_i$ appears as a negative literal, then the path from $v_k^r$ to $c_k$ uses the labels $3i-1$ and $3i$, while the merging scheme merges $3i-2$ with $3i-1$. Hence, we can conclude that $5n+3m+1+l$ vertices are reachable under the produced merging scheme.
\end{proof}

\begin{lemma}
\label{lem:max-path-correct}
If there exists an optimal $(2,n)$-merging scheme such that $6n+3m+1+l$
vertices are reachable from $S$, then there exists an assignment for $\phi$ that satisfies $l$ clauses.
\end{lemma}
\begin{proof}
To begin with, observe that it is without loss of generality to assume that the
$(2,n)$-merging scheme that maximizes the reachable set from $S$ does not merge
$3i$ with $3i$ for any $i \in [n]$. For contradiction, assume that there exits an $i$ such that every $(2,n)$-merging scheme that maximizes the reachable set from $S$ has to merge $3i$ with $3i+1$. Observe that this merge makes the two vertices in the middle of the $i$th v-path gadget unreachable. 
We will consider the following two cases for $3i-1$.
\begin{itemize}
    \item $3i-1$ is not merged. Then we can ``unmerge'' $3i$ and $3i+1$ and merge instead $3i-1$ with $3i$. Thus the number of 2-merges remains the same. In addition, the two vertices in the middle of the $i$th v-path are reachable. Finally, in the worst case, the vertices $c_k$ and $c_{k'}$ will become unreachable. These are the two vertices that correspond to the two clauses where variable $x_i$ appears with a negative literal. Hence, we get a different $(2,n)$-merging scheme with not decreased reachability set for $S$ that does not merge $3i$ with $3i+1$.
    \item $3i-1$ is merged. Then, there should exist a $j$ such that $3j-1$ that is not merged and at least one of $3j-2$ and $3j$ is not merged; this is due to the fact that the $(2,n)$-merging scheme that maximizes the set of reachable vertices uses exactly $n$ 2-merges. So, again we can unmerge $3i$ and $3i+1$  and instead add a merge that involves $3j-1$. This way, of reachable vertices is not decreased. By the unmerging the two vertices in the middle of the $i$-th vertex path are reachable. For the merge that involves $3j-1$ one of the following should hold.
    \begin{itemize}
        \item $3j-1$ is merged with $3j-2$. At most one vertex $c_k$ will become unreachable; this is the vertex that corresponds to the clause where variable $x_j$ appears as a positive literal. Thus, the initial $(2,n)$-merging scheme did not maximize the number of reachable vertices from $S$.
        \item $3j-1$ is merged with $3j$. Then, at most two vertices $c_k$ and $c_{k'}$ become unreachable. These vertices will correspond to the clauses where $x_j$ appears as a negative literal. Hence, we get a different $(2,n)$-merging scheme with the same number of reachable vertices that did not merge $3i$ with $3i+1$.
    \end{itemize}
\end{itemize}

From the above we can conclude that we have a $(2,n)$-merging scheme where $3i-1$ is merged with one of $3i-2$ or $3i$ for every $i \in [n]$.
So, given a $(2,n)$-merging scheme that satisfies the constraints above, we construct
the truth assignment for the variables of $\phi$ as follows. If $3i-1$ is merged
with $3i-2$, then we set $x_i$ to False; else we set it to True. Observe that for every c-path where vertex $c_k$ is reachable under the merging scheme, we get that the corresponding clause is satisfied by the produced assignment; this it due to our 
construction. It remains to show that there exist $l$ c-paths where the $c$-vertex is reachable. This is not hard to see. Firstly, all the vertices of the v-paths are reachable under the merging scheme. Then, for all clause-paths observe that under any $(2,n)$-merging
scheme all the vertices except the $c$-vertices are reachable; these are $3m+1$ in 
total. Hence, it must be true that under the assumed merging scheme $l$ $c$-vertices
are reachable from $S$.
\end{proof}

Lemmas~\ref{lem:max-path-sound} and~\ref{lem:max-path-correct} imply that \maxreach 
under $\merging{2,\mu}$ is \NP-hard. As we have already explained, this means that the 
problem is \NP-hard for any $\lambda \geq 2$.  

\begin{theorem}
\label{thm:max-path}
\maxreach under $\merging{\lambda,\mu}$ is \NP-hard for any $\lambda \geq 2$, even 
when the underlying graph is a path, every edge has one label, and at any time step 
there exist at most four edges available.
\end{theorem}

\subsection{Graphs with a single source}
\label{sec:max-one}
In this section we utilize the c-paths and the v-paths from the previous section and prove \NP-hardness for \maxreach with only one source. Thus, we get as a corollary \NP-hardness for \maxminreach and \maxavgreach. An easy way to prove \NP-hardness for this case, would be to shift all the labels of the edges of the v-paths and c-path by 1, create a vertex $v_0$ which would be the unique source, and connect it to every endpoint of the c-paths and v-paths with edges that appear at time step 1. However, this is not the strongest \NP-hardness result we could get. Instead, next we provide a very constrained class of graphs.

\paragraph{\bf Construction.}
We first create a path of length $2n+3m+7$ that consists of vertices $w_1, \ldots, w_{2n+3m+7}$, where $w_1$ is the unique source of the graph. For every $i \in [2,2n+3m+3]$, the label of edge $w_{i-1}w_i$ is $i$. The labels of the remaining edges are as follows: $w_{2n+3m+3}w_{2n+3m+4}$ has label $2n+3m+5$; $w_{2n+3m+4}w_{2n+3m+5}$ has label $2n+3m+8$; $w_{2n+3m+5}w_{2n+3m+6}$ has label $2n+3m+11$; $w_{2n+3m+6}w_{2n+3m+7}$ has label $2n+3m+14$.
We then use the c-paths and the v-paths from the previous section and connect them with the path we have created. More specifically, for the $i$-th v-path we connect  its left endpoint, $y_i^l$, with vertex $w_{2i-1}$ with an edge that has label $2i-1$ and its right endpoint, $y_i^r$, with vertex $w_{2i}$ with an edge that has label $2i$. For the $j$-th c-path, we connect endpoint $v^l_k$ with vertex $w_{2n+3j}$ with an edge that has label $2n+3j$ and endpoint $v^r_k$ with vertex $w_{2n+3j+2}$ with an edge that has label $2n+3j+2$. Finally, for every edge that belongs to a c-path or to a v-path and used to have label $i$  in the new construction it will have label $2n+3m+2+i$. Figure~\ref{fig:max-reach-graph} depicts how exactly this is done. Observe that the constructed temporal graph $\tuple{G, \tcal}$ has at most four edges available at any time step, $G$ is bipartite since every cycle in the graph has even length, and every edge of $G$ appears only at one time step. 

\begin{figure}%[h]
\begin{center}
  \includegraphics[scale=0.30]{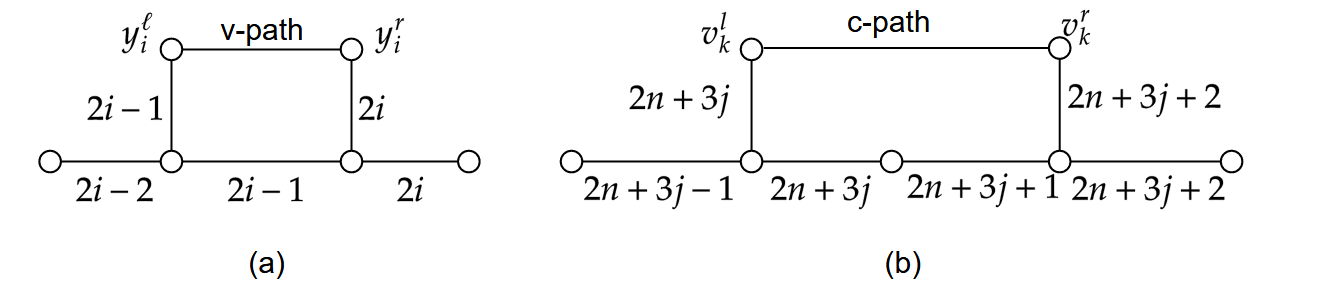}
  \caption{The gadgets used to prove \NP-hardness for \maxreach in graphs with only one source. Subfigure (a) depicts how we connect a v-path, while Subfigure (b) shows how we connect a c-path to the path we have created.}\label{fig:max-reach-graph}
  \end{center}
%    \vspace{-0.8cm}
\end{figure}

\paragraph{\bf Intuition.} The high level intuition behind the following theorem is that {\em any} $(2,n)$-merging scheme that maximizes the vertices reachable from $w_1$ {\em does not} merge any edges with label less than $2n+3m+2$. This is because such a merge makes at least last five $w$-vertices unreachable while any other merge makes at most two vertices unreachable. Hence, at time step $2n+3m+2$ all the endpoints of c-paths and v-paths have been reached and thus we have an instance equivalent to the instance from Section~\ref{sec:max-path} which we know that it is \NP-hard to solve.

\begin{theorem}
\label{thm:max-one}
\maxreach, \maxminreach, \maxavgreach are \NP-hard under the merging operation even on temporal graphs $\tuple{G,\tcal}$ where 
\begin{itemize}
    \item there is one source;
    \item $G$ is bipartite or maximum degree 3;
    \item every edge appears only at one time step;
    \item at any time step there are at most four edges available.
\end{itemize}
\end{theorem}
\begin{proof}
Consider an arbitrary $(2,n)$-merging scheme that maximizes the reachability set in the graph $\tuple{G, \tcal}$ constructed above. Firstly, observe that this merging scheme does not merge any time step $i$ where $i \leq 2n+3m+2$. To see why this is the case, assume for the sake of contradiction that there exists a $(2,n)$-merging scheme $i$ and $i+1$ are merged, where $i \leq 2n+3m+2$. Then, observe that at every vertex $w_j$ with $j \geq i+1$ is not reachable. Hence, at least five vertices are not reachable. On the other hand, since the merging scheme has $n$ 2-merges, it means that there an $j \in [n]$ such that $2n+3m+2+3j$ is not merged with $2n+3m+2+3j+1$. If we merge these labels we make at most two vertices unreachable. This contradicts the fact that the original merging scheme was optimal. Hence, under any optimal $(2,n)$-merging scheme we get that after timestep $2n+3m+2$ every endpoint of a c-path and a v-path are reachable. In addition, observe that the vertices $w_i$ with $i > 2n+3m+2$ will be reachable independently from the merging scheme chosen as long as there are no merges before timestep $2n+3m+2$. Hence, after timestep $2n+3m+2$ we have an sub-instance that it is equivalent to the instance from Section~\ref{sec:max-path}; they are identical up to a shift of the labels by $2n+3m+2$. We know that \maxreach is \NP-hard for this instance, thus the theorem follows.
\end{proof}

%%%%%%%%%%%%%%%%%%%%%%%%%%%%%%%%%%%%%%%
\subsection{Trees with one source}
\label{sec:max-tree}
In this section we prove that \maxreach is \NP-hard even on trees with only one source.  Again, our reduction is from \maxsat. First, we  explain how to get  \NP-hardness for forests where every connected component of the forest has only one source. Then, using the idea from the previous section, we connect the components of the forest and create a single tree with only one source.

\paragraph{\bf Construction.} 
We will use a similar approach as before and we associate each variable with three consecutive time steps: variable $x_i$ will be associated to time steps $3m+3i-2$, $3m+3i-1$, and $3m+3i$, where $m$ is the number of the clauses. This time though we associate every clause with three consecutive time steps as well. So, the $k$-th clause with be associated with time steps $3k-2, 3k-1$ and $3k$.

\begin{figure}%{r}{0.6\textwidth}
%\vspace{-0.6cm}
  \begin{center}
  \includegraphics[scale=0.3]{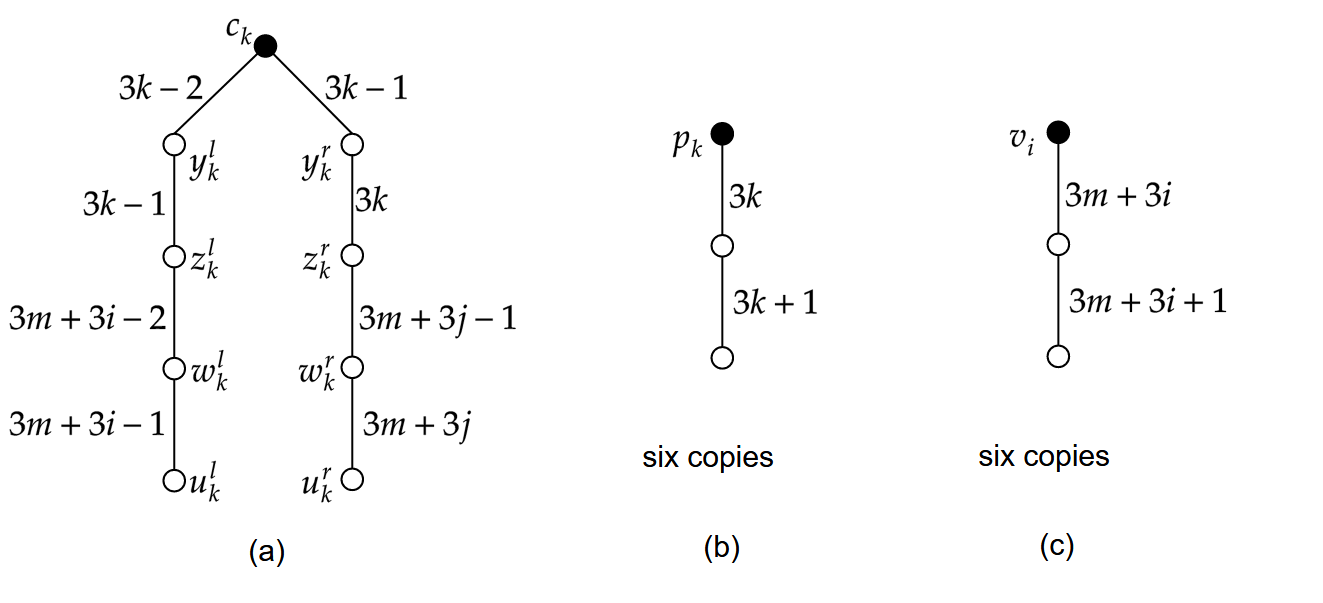}
  \caption{The gadgets used in Section~\ref{sec:max-tree} to show hardness for \maxreach on trees.}\label{fig:max-reach-tree-one}
%\vspace{-0.8cm}
  \end{center}
\end{figure}

For every clause we create a tree with 9 vertices where only one of them is a source;  an example of this tree is depicted in the left part of Figure~\ref{fig:max-reach-tree-one}. Each such tree consists of three pieces: the middle piece, the left piece, and the right piece. 
For the $k$-th clause, the middle piece consists of the vertices $c_k, y_k^l, z_k^l, y_k^r, z_k^r$; $c_k$ is a source. 
Edge $c_ky_k^l$ has the label $3k-2$, edges $y_k^lz_k^l$ and $c_ky_k^r$ have the label $3k-1$, and edge $y_k^rz_k^r$ has the label $3k$. The labels on the left and the right pieces of the path depend on the literals of the variables of the clause. 
If variable $x_i$ appears in the clause with a positive literal, then we pick an arbitrary side of the path, say the left, and add the label $3m+3i-2$ to the edge $z^l_kw^l_k$ and the label $3m+3i-1$ to the edge $w^l_kv^l_k$. 
If $x_i$ appears in the clause with a negative literal, then we add the label $3m+3i11$ to the edge $z^l_kw^l_k$ and the label $3m+3i$ to the edge $w^l_kv^l_k$.
In addition, we create the following. For every $k \in [m]$ we create six paths of 
length 2; see Figure~\ref{fig:max-reach-tree-one}(b). 
One endpoint of every such path is a source. The labels of the edges for every path, from the source to the other end, are $3k$ and $3k+1$. 
For every $i \in [n]$ we create six paths of length 2; the right part of Figure~\ref{fig:max-reach-tree-one}(c). One endpoint of every path is a source. The labels of the edges for every path, from the source to the other end, are $3m+3i$ and $3m+3i+1$. 
The constructed temporal graph \tgraph has $27m+18n$ vertices and lifetime $3m+3n+1$. All the vertices of the graph are reachable from $S$. We ask for a $(2, m+n)$-merging scheme. Observe that at any time step there exist at most eight edges available.

\paragraph{\bf Intuition.} 
The high level idea behind the construction is that in any $(2, m+n)$-merging scheme should make  all the vertices of the paths of length 2 are reachable; this is guaranteed by the 
number of copies of the paths and the choice of the labels in these paths. 
Hence, an optimal merging scheme maximizes the number of reachable vertices in 
the gadgets for the clauses. In every such gadget at most six vertices can reached. In order to reach 6 vertices the following synergy must happen: the root has to choose a side to reach and then the last part of that side has to indeed not be blocked by a merge. The fist step corresponds to a clause choosing which variable will satisfy it; our construction guarantees that can choose any of the two variables. The second step will happen only if the variable has the ``correct'' value for the clause.

The construction and the intuition behind this result is similar in spirit with the \NP-hardness result of Section~\ref{sec:min-path}. The main gadget for the clauses shares the same principle as in Section~\ref{sec:min-path}. There, vertex $c_k$ chooses via a merge which side to ``save'' and an extra merge at this side must happen in order to achieve this. Here, vertex $c_k$ chooses via a merge which side to reach and an extra merge to achieve this. 

\begin{lemma}
\label{lem:max-tree-sound}
If there exists an assignment for the variables of $\phi$ that satisfies $l$ clauses,
then there exists a $(2,m+n)$-merging scheme for \tgraph such that $26m+18n+l$
vertices are reachable from $S$.
\end{lemma}

\begin{proof}
Given an assignment that satisfies $l$ clauses we create the a merging scheme as 
follows. If $x_i$ is True, then we merge $3m+3i-1$ with $3m+3i$, else we merge 
$3m+3i-2$ with $3m+3i-1$. In addition, we consider every clause separately. If the $k$-th
clause is satisfied by the variable that corresponds to the left piece of the gadget
we created for the clause, then we merge $3k-2$ with $3k-1$; else we merge $3k-1$ with 
$3k$. If the clause is not satisfied, then we merge $3k$ with $3k-1$. Observe that
under this merging scheme all the vertices of the length-two paths are reachable.
In addition, in every tree that corresponds to a clause at least three vertices of
$y^l_k, y^r_k, z^l_k$, and $z^r_k$ are reachable from $c_k$. In addition, for every
satisfiable clause the vertices $w^l_k, v^l_k$, and $u^l_k$ are reachable if the 
clause is satisfied by the ``left'' variable, or the vertices $w^r_k, v^r_k$, and 
$u^r_k$ are reachable if the clause is satisfied by the ``right'' variable.
It is not hard to see that under this merging scheme $26m+18n+l$ vertices are 
reachable from $S$.
\end{proof}

To prove the other direction, we will first prove an auxiliary lemma.

\begin{lemma}
\label{lem:max-tree-aux}
Under any reachability-maximizing $(2,m+n)$-merging scheme  for the constructed temporal graph \tgraph, the following hold:
\begin{itemize}
     \item $3k-1$ is merged for every $k \in [m]$;
     \item $3m+3i-1$ is merged for every $i \in [n]$.
\end{itemize}
\end{lemma}

\begin{proof}
In order to prove the lemma, it suffices to prove that there does not exist a reachability-maximizing $(2,m+n)$-merging scheme that merges
\begin{itemize}
    \item $3k$ with $3k+1$ for any $k \in [m]$;
    \item $3m+3i$ with $3m+3i+1$ for any $i \in [n]$.
\end{itemize}
 Observe that any such merge makes six vertices of \tgraph unreachable; these vertices belong to a subset of paths of length 2 we constructed. For the sake of contradiction, assume that we have a reachability-maximizing $(2,m+n)$-merging scheme that merges $3k$ with $3k+1$; for the case where $3m+3i$ is merged with $3m+3i+1$ identical arguments apply. Then, at least one of the following cases is true:
 \begin{itemize}
     \item there exists a $k' \in [m]$, where $k' \neq k$ such that $3k'-1$ is not merged and can be merged with $3k'-2$ or $3k'$;
     \item there exists a $i \in [n]$, such that $3m+3i-1$ is not merged and can be merged with $3m+3i-2$ or $3m+3i$.
 \end{itemize}
 One of the two cases will be true due to the fact that any reachability-maximizing scheme will use exactly $m+n$ 2-merges and the lifetime of the graph is $3m+3n+1$. If the first case is true, then we can unmerge $3k$ with $3k+1$ and merge $3k'-1$ instead with $3k'$ say. Then, we make all six vertices from the paths of length 2 reachable, while we make at most five vertices unreachable at the gadget that involves timestep $3k'-1$: $c_{k'}$ will reach at least the vertices $y_{k'}^r, y_{k'}^l$ and $z_{k'}^l$. Thus we have increased the number of reachable vertices which contradicts the optimality of the initial merging scheme. If the second case is true then we unmerge $3k$ with $3k+1$ and merge $3m+3i-1$ instead with one of $3m+3i-2$ or $3m+3i$. Then, the only vertices that were reachable under the previous merging scheme, but are not longer reachable under the new merging scheme are the $u$-vertices of clause gadgets that are adjacent to an edge with label $3m+3i-2$ or label $3m+3i$. Observe that there are at most two edges with either of the cases. Hence, again this change increased the number of reachable vertices, a contradiction.
\end{proof}

Using Lemma~\ref{lem:max-tree-aux}, we are ready to prove the correctness of our construction.

\begin{lemma}
\label{lem:max-tree-correct}
If there exists a reachability-maximizing $(2,m+n)$-merging scheme such that 
$26m+18n+l$ vertices are reachable from $S$ in \tgraph, then there exists a
truth assignment that satisfies $l$ clauses of $\phi$.
\end{lemma}

\begin{proof}
From Lemma~\ref{lem:max-tree-aux} we know that $3k-1$ will be merged for every $k \in [m]$. Consider the following two cases for $3k-1$.
\begin{itemize}
    \item $3k-1$ is merged with $3k-2$. Then the vertices $z_k^l, w_k^l$, and $u_k^l$ become unreachable, while the vertices $y_k^l$, $y_k^r$, $z_k^r, w_k^r$, $u_k^r$, and $w_k^r$ are reachable {\em independently} the remaining merges.
    \item If $3k-1$ is merged with $3k$, then the vertices $z_k^r, w_k^r$, and $u_k^r$ become unreachable, while the vertices $y_k^r$, $y_k^l$, $z_k^l, w_k^l$, $u_k^l$, and $w_k^l$ are reachable {\em independently} the remaining merges.
\end{itemize}
Hence, in any reachability-maximizing $(2,m+n)$-merging scheme we get that at least five vertices from every clause-gadget are reachable. Furthermore, the merging scheme guarantees that all the vertices that belong to paths of length 2 will be reached. Thus, we have that $17m+9n$ will be reached. 

Hence, there exist $l$ $u$-vertices that are reachable the reachability-maximizing merging scheme we consider. If a $u$-vertex is reachable and is adjacent to an edge with label $3m+3i-1$, then we set variable $x_i$ to True; else if it is reachable and is adjacent to an edge with label $3m+3i$, then we set variable $x_i$ to False. If there are any variables which we have not defined their value, then we arbitrarily set the to True. Observe that this procedure sets correctly the values of the variables; no variable can get more than one value. This is because we cannot have two $u$-vertices that are adjacent to edges with labels $3m+3i-1$ and $3m+3i$ and both of them to be reachable. At least one of them will become unreachable since $3m+3i-1$ is merged. Our construction guarantees that this assignment for $x_i$ will satisfy all the clauses where the corresponding $u$-vertices are reachable. Since there are $l$ $u$-vertices reachable, then we get that $l$ clauses will be satisfied.
\end{proof}

The combination of the lemmas above already yield \NP-hardness for \maxreach on
forests. However, we can use a construction similar to one used in the proof of Theorem~\ref{thm:max-one} and get the following theorem.

\begin{theorem}
\label{thm:max-tree-one}
\maxreach, \maxminreach, \maxavgreach are \NP-hard under merging operations even when there exists only one source, $G$ is a tree of constant pathwidth, has maximum degree three, every edge has only one label, and at any time step there are at most eight edges available.
\end{theorem}

\begin{proof}
We begin by creating a path of length $7m+6n+10$: the path begins with $m$ $c'$-vertices, then we have $6m$ vertices that termed $p'$-vertices, then we have $6n$ $v'$-vertices, and lastly we have 10 $q'$-vertices. The labels are defined as follows: 
\begin{itemize}
    \item edge $c'_ic'_{i+1}$ has label $i$ for every $i \in [m-1]$;
    \item edge $c'_mp'_1$ has label $m$;
    \item edge $p'_ip'_{i+1}$ has label $m+i$ for every $i \in [3m-1]$;
    \item edge $p'_{3m}v'_1$ has label $7m$;
    \item edge $v'_iv'_{i+1}$ has label $7m+i$ for every $i \in [n-1]$;
    \item edge $v'_{3n}q'_1$ has label $7m+6n$;
    \item edge $q'_iq'_{i+1}$ has label $7m+6n+3i-1$ for every $i \in [9]$.
\end{itemize}
The unique source will be the vertex $c'_1$.
To complete the construction, we will use the gadgets we have constructed before. For every $k \in [m]$, we add the edge $c'_kc_k$ with label $k$, where $c_k$ is the vertex of the clause gadget we have created before. The $i$-th $p'$-vertex is connected with one $p$-vertex from the previous gadgets via an edge with label $m+i$, such that all $p$-vertices are connected to a $p'$-vertex. Furthermore, the $i$-th $v'$-vertex is connected to one $v$-vertex via an edge with label $7m+i$. Finally, we shift the labels of the gadgets from Figure~\ref{fig:max-reach-tree-one} by $7m+6n$, i.e. if an edge had the label $i$, now it will have the label $7m+6n+i$. Observe that the constructed graph is a tree of maximum degree 3, every edge has exactly one label, and at any time step there are at most eight edges available.

\begin{figure}%{r}{0.6\textwidth}
%\vspace{-0.6cm}
  \begin{center}
  \includegraphics[scale=0.35]{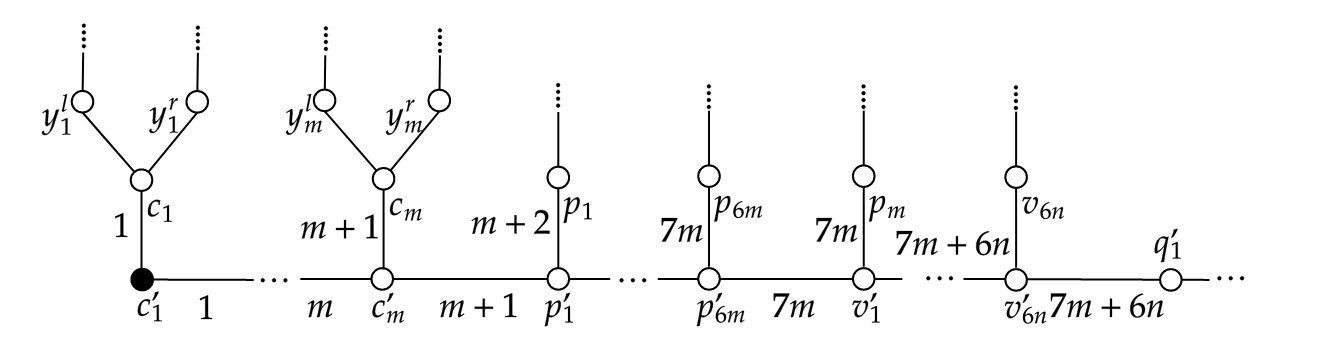}
  \caption{The tree with the one source for the \maxreach \NP-hardness.}\label{fig:max-reach-tree-one}
%\vspace{-0.8cm}
  \end{center}
\end{figure}

The crucial observation is that any reachability-maximizing $(2,m+n)$-merging scheme for the constructed graph will not merge any $i \leq 7m+6n$ since this will make at least the ten $q'$-vertices unreachable, while any other merge would make unreachable at most six vertices (if we merge $M+3k$ with $M+3k+1$ for $k \in [m]$, or if we merge $M+3m+3i$ with $M+3m+3i+1$ for some $i \in [n]$, where $M = 7m+6n$). We omit this part of the proof since it is almost identical to the proof of  Theorem~\ref{thm:max-one}. Hence, under any reachability-maximizing $(2,m+n)$-merging scheme, at time step $7m+6n$ every $c$-vertex, every $p$-vertex, and every $v$-vertex will be reached. In addition, all the $q'$-vertices will be reached under any such merging scheme, hence they do not affect the choices of the merging scheme. The correctness of the theorem follows by the observation that we have an sub-instance that it is identical to the \NP-hard instance we had before.
\end{proof}

%%%%%%%%%%%%%%%%%%%%%%%%%%%%%%%%%%%%%%%

\begin{comment}
\subsection{Timed versions}
\label{sec:max-timed}
In this section we prove \NP-hardness for the timed versions of 
Problems~\ref{probc:maxreach}~-~\ref{probc:maxavg}. Again, we use the
same trick as we used in Section~\ref{sec:min-timed}. Given a temporal 
graph \tgraph constructed in the previous sections with lifetime $t^*$, 
we add the label $t^*+2$ for every edge. Under any merging scheme that 
uses 2-merges, all the vertices of \tgraph are reachable at time $t^*+2$. 
This time we will ask to maximize the number of reachable vertices
until time $t^*$. Again, an optimal merging scheme for the 
timed version of a maximization problem is an optimal merging scheme for
the ``standard'' version of the problem. 

\begin{corollary}
\label{cor:min-timed}
Problems~\ref{probc:maxreach}~-~\ref{probc:maxavg} are \NP-hard for any 
$\lambda\leq 2$ even when there exists only one source, \tgraph is planar, 
bipartite, has maximum degree three, every edge is available only for two time
steps. The same result holds even when the underlying graph is a tree.
\end{corollary}

\end{comment}
%%%%%%%%%%%%%%%%%%%%%%%%%%%%%%%%%%%%%%%
%%%%%%%%%%%%%%%%%%%%%%%%%%%%%%%%%%%%%%%
%%%%%%%%%%%%%%%%%%%%%%%%%%%%%%%%%%%%%%%
%\newpage
\section{Delaying}
In this section we study {\em edge-independent} delays. Firstly, we show that when 
the number of allowed delays is bounded, then the minimization problems are \NP-hard.
Then, we study the case where the number of $\delta$-delaying operations is unbounded.
In contrast to unbounded edge deletions where the solution to the problems become trivial 
by essentially isolating every source, an unbounded number of independent $\delta$-delays, i.e when an edge cannot be delayed more than $\delta$ time steps, does not
trivialize the problems and most importantly does not destroy the underlying network.
%%%
%%%

\begin{theorem}
\label{thm:delay}
\minreach, \minmaxreach, and \minavgreach are \NP-hard under $\delaying{\delta}$, 
for any $\delta \geq 1$, when the number of operations is bounded by $\kappa$. 
In addition, they are $W[1]$-hard, when parameterized by $\kappa$.
\end{theorem}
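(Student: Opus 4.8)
The plan is to give a single parameterized reduction from \clique, which is \NP-hard and $W[1]$-hard when parameterized by the clique size $k$, and to set the delay budget $\kappa$ to be a function of $k$ only (in fact $\kappa=k$), so that the same reduction simultaneously yields \NP-hardness and $W[1]$-hardness with respect to $\kappa$. I would build a temporal graph with a single source $s$, so that \minreach, \minmaxreach, and \minavgreach coincide (as noted for the case $|S|=1$), and it therefore suffices to argue about \minreach. I would start with $\delta=1$ and encode the choice of which $k$ vertices of the input graph $H=(V_H,E_H)$ enter the clique by which $k$ bottleneck edges we choose to delay.

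Concretely, for every vertex $i \in V_H$ I would create a \emph{route} from $s$ that passes through a single bottleneck edge $b_i$, after which it branches to a private block of $C$ fresh target vertices $R_i$ (with $C$ large, e.g.\ $C>\binom{k}{2}$) and to one shared vertex $w_{ij}$ for each incident edge $\{i,j\}\in E_H$. Each shared vertex $w_{ij}$ is placed so that it is reachable along route $i$ \emph{and} along route $j$, and the labels are chosen so that a single $1$-delay on $b_i$ pushes its label past the continuation and thereby severs route $i$ entirely: this makes all of $R_i$ unreachable and makes $w_{ij}$ unreachable \emph{only} when both $b_i$ and $b_j$ are delayed. With a budget of $\kappa=k$ delays, severing $k$ distinct routes removes exactly $kC$ private vertices plus the number of edges of $H$ induced by the selected $k$ vertices; since $C$ is large, this dominates any scheme that severs fewer than $k$ routes, so an optimal scheme severs exactly $k$ routes and then maximizes the induced edge count. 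As a $k$-vertex set induces at most $\binom{k}{2}$ edges, with equality iff it is a clique, the reachable set drops to $|V|-kC-\binom{k}{2}$ iff $H$ has a $k$-clique, which fixes the decision threshold; because $\kappa=k$ the map is also a parameterized reduction.

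To cover every $\delta\geq 1$, I would scale all labels by a factor depending on $\delta$ and size the bottleneck gap so that a single $\delta$-delay on $b_i$ still severs exactly route $i$ and cannot bridge across scaled blocks, in the spirit of the label-stretching argument of Theorem~\ref{thm:main-min-hard}. The soundness direction (a $k$-clique yields a scheme of $k$ delays attaining the target) is a direct translation. The completeness direction is where the main difficulty lies: unlike merging, delaying is \emph{not} monotone, since pushing a label forward can also create a new strict temporal path. Hence I must argue that in the constructed instance no delay, or combination of at most $\kappa$ delays, ever reconnects a vertex that was cut, and that the only reachability-reducing delays are those severing the bottleneck edges $b_i$, every other delay being wasteful. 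Establishing this rigidity---so that an arbitrary optimal scheme can be normalized to one selecting $k$ distinct routes---is the key step, after which the clique correspondence follows from the counting above.
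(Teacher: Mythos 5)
Your proposal is essentially the paper's reduction: both reduce from \clique with a single source, spend the budget $\kappa=k$ on per-vertex ``selector'' delays, and count per-edge vertices that become unreachable exactly when both of their endpoints' selectors are delayed, so that the reachability set shrinks by $\binom{k}{2}$ if and only if the selected vertices form a clique; since $\kappa$ depends only on $k$, the same map gives $W[1]$-hardness. The substantive difference is that the paper's construction is flat: $v^*$ meets every vertex-gadget at time $1$ and every vertex-gadget meets its incident edge-gadgets at time $2$, so the lifetime is $2$. This makes the rigidity step you correctly flag as the crux completely trivial: delaying a label-$\tmax$ edge can never help, so only the label-$1$ selector edges are worth delaying; no delay can create a new temporal path (there is nothing before time $1$ to connect to); and any $\delta\ge 1$ already pushes a label-$1$ edge to a label of at least $2$, killing the strict continuation, which disposes of general $\delta$ with no label scaling. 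It also removes the need for your padding blocks $R_i$: in the paper's graph a delayed vertex-gadget is still reached, just later, so the \emph{only} vertices a delay can ever disconnect are edge-gadgets, and the threshold $|V|-\binom{k}{2}$ alone forces the clique. Your longer-route variant can be made to work, but you would then have to actually prove the normalization (no delay reconnects anything, and no non-bottleneck delay removes more than $C$ vertices), which is precisely the work the two-layer design avoids.
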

%\begin{comment}
\begin{proof}
To begin with, we make the following observation that will be helpful later in 
the proof.
For any temporal graph delaying any edge with label $\tmax$ does
not decrease the reachability set of any source. This is because if vertex $u$ 
from edge $uv$ is already reached by a source before $\tmax$, then $v$ will be 
reached from $uv$ independently from any delay on this edge. Hence, if we have 
a temporal graph with $\tmax = 2$, we can assume without loss of generality that 
in any optimal solution we delay edges with label 1 {\em only}. 

Firstly, we study $\minreach~ \delaying{\delta, \kappa}$. We will prove that 
the problem is hard even when $|S|=1$, i.e., there is only one source.
As before, this result implies that \minmaxreach and \minavgreach are \NP-hard 
under delaying as well.

To prove our result we follow a similar technique as the authors of~\cite{Deleting_Edges} 
and we reduce from \clique. An instance of \clique consists of a graph $G'=(V',E')$ and 
an integer $k$. We want to decide if $G'$ contains a clique of size $k$, i.e., a complete 
subgraph with $k$ vertices.
We construct a temporal graph \tgraph, where $G=(V,E)$, as follows. For every vertex and every
edge of $G'$ we create a vertex in $G$; we term the former ones $v$-vertices and the latter 
$e$-vertices. In addition, we create the vertex $v^*$ which will be the the only source.
Hence, $|V|=|V'|+|E'|+1$. 
The edge-set $E$ consists of:
\begin{itemize}
    \item $v^*v$, where $v$ is a $v$-vertex;
    \item $vu$, where $v$ is a $v$-vertex, $u$ is a $e$-vertex and the corresponding edge from 
    $E'$ contains the vertex $v$.
\end{itemize}
Every edge in $G$ that contains $v^*$ has label 1 and the rest of the labels have label 2. 
Finally, we set $\kappa = k$ and $\delta \geq 1$. We claim that $G'$ contains a clique of 
size $k$ if and only if there is a $(\delta,\kappa)$-delaying scheme such that $|V|-\frac{k(k-1)}{2}$ vertices are reachable from $v^*$ in $G$.

Assume now that $G'$ contains a clique $k$ and let $X = \{v_1, \ldots, v_k\}$ denote the set
of its vertices and let $Y$ denote the edges of the clique. Then, we delay by $\delta$ all 
the edges $v^*v$, where $v \in X$. We claim that no vertex in $Y$ is reachable from $v^*$.
To see why this is the case, let $y \in Y$ that correspond to the edge $v_iv_j$ where 
$v_i \in X$ and $v_j \in Y$. Observe that by the construction of \tgraph there are only two 
temporal paths from $v^*$ to $y$: $v^* {\overset{1}{\longrightarrow}} v_i {\overset{2}{\longrightarrow}} y$
and $v^* {\overset{1}{\longrightarrow}} v_j {\overset{2}{\longrightarrow}} y$. Hence, after
delaying the edges $v^*v_i$ and $v^*v_j$, we see that none of these paths is valid any more.
Since $|Y|=\frac{k(k-1)}{2}$, we get that $v^*$ reaches $|V|-\frac{k(k-1)}{2}$ vertices.

To prove the other direction, recall that since the \tgraph has lifetime 2, it suffices to delay
only edges with label 1. So assume that we can delay $\kappa=k$ edges with label 1 by $\delta$ such 
that $|\reach(v^*,\tuple{G,\tcaldk^{D}})| = |V|-\frac{k(k-1)}{2}$. Let $v^*v_1, \ldots,v^*v_k$ be the edges we
delayed. Observe that all the $v$-vertices are reachable by $v^*$. Hence, after the delays, $\frac{k(k-1)}{2}$ 
$e$-vertices are not reachable by $v^*$. So, by the construction of \tgraph, in $G'$ there should 
be $\frac{k(k-1)}{2}$ edges between the vertices $v_1, \ldots, v_k$. Hence, $G'$ contains a clique of size $k$.
This establishes the \NP-hardness of $\minreach~ \delaying{\delta, \kappa}$. Since $|S|=1$, the
hardness for the other two problems follows. Finally, since \clique is $W[1]$-complete and we have 
a parameterized $m$-reduction, the get the $W[1]$-hardness for the problems when they are parameterized
by the number of allowed delays $\kappa$.
\end{proof}
%\end{comment}

\subsection{A polynomial-time algorithm}
Next, we provide a polynomial-time algorithm for the minimization problems under $\delaying{\delta}$, for 
any $\delta \geq 1$.
Let us define the {\em reachability network} which will be used by our algorithm. Given a temporal
graph \tgraph with lifetime \tmax and a set of sources $S$, for every $t \leq \tmax$, we define 
$RV_t(\tgraph, S)$ to be the set of vertices that are reached at time $t$ {\em for first time} from
a vertex of $S$. Put formally, $v \in RV_t(\tgraph, S)$, if:
\begin{itemize}
    \item there exists an $s^* \in S$ such that the {\em earliest arrival} path from $s$ to $v$ arrives at time $t$;
    \item for every $t'<t$ there is no path from any $s \in S$ to $v$ that arrives at time $t'$.
\end{itemize}
Additionally, we define $RE_t(\tgraph, S)$ to be the set of temporal edges with label $t$ that are adjacent to vertices in  $RV_t(\tgraph, S)$. An example of these notions is depicted in Figure~\ref{fig:alg-example}.
Observe that we can decide if $v \in RV_t(\tgraph, S)$  via computing 
the earliest arrival paths between every $s \in S$ and $v$, which can be done in polynomial time with 
respect to the size of \tgraph~\cite{WC+14}. Similarly, we can efficiently compute $RE_t(\tgraph, S)$.
Finally, assume that the edge $uv$ {\em before any delaying} had the label $x$ and after some delays it has the label $t \leq x + \delta$. We say that it is $\delta$-possible to change a label of $uv$ from $t$ to $t+1$ if the $t+1 - x \leq \delta$; i.e. the edge is not delayed more than $\delta$ time steps.

\begin{figure}
%		\scalebox{0.9}{
			\begin{algorithm}[H]
				\SetKwInOut{Input}{Input}
				\SetKwInOut{Output}{Output}
				\DontPrintSemicolon
				
                $\tcald^D \leftarrow \tcal$;\;
				\For{$1\leq t < \tmax+ \delta$}{
				Compute $RE_{t}(\langle G,\tcald^D \rangle, S)$\;
					\ForEach{edge $e \in RE_{t}(\langle G,\tcald^D \rangle, S)$}{
					Change the label $i$ of $e$ to $t+1$, if this is $\delta$-possible;\;
					Update $\tcald^D$;\;
					}
				}
				
				\caption{Algorithm for $\delaying{\delta}$.}
				\label{alg:minreach-delay}
			\end{algorithm}
%		}
%\vspace{-0.5cm}
\end{figure}
	
\begin{figure}%[h]%{r}{0.6\textwidth}
%\vspace{-0.6cm}
  \begin{center}
  \includegraphics[scale=0.35]{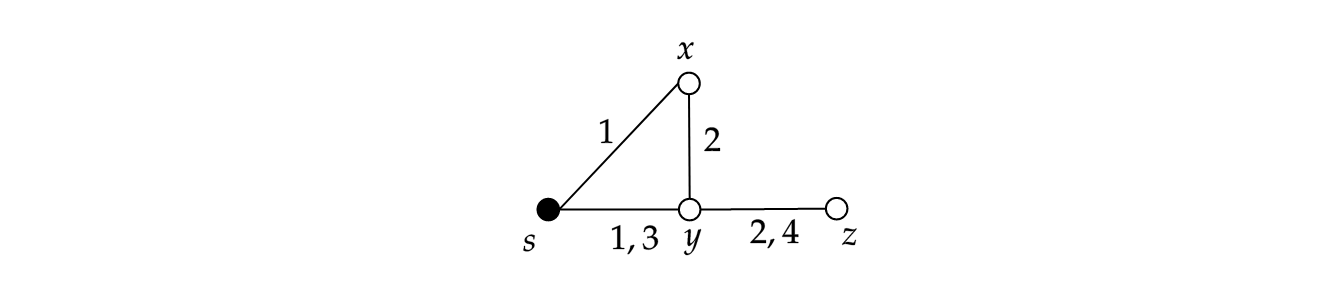}
  \caption{In the temporal graph \tgraph above with $S=s$, we have $RV_1(\tgraph,s) = \{x,y\}$,
  $RV_2(\tgraph,s) = \{z\}$, $RE_1(\tgraph,s) = \{sx,sy\}$, and $RE_2(\tgraph,s) = \{yz\}$. Any other set is empty.}
\label{fig:alg-example}
%\vspace{-0.5cm}
  \end{center}
\end{figure}

\begin{lemma}
\label{lem:alg-correct}
Algorithm~\ref{alg:minreach-delay} is optimal for \minreach, \minmaxreach, \minavgreach under 
$\delaying{\delta}$.
\end{lemma}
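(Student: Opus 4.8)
The plan is to prove the stronger, ``pointwise'' statement that the schedule $\sigma^*$ produced by Algorithm~\ref{alg:minreach-delay} is simultaneously optimal: writing $R(\sigma)=\bigcup_{v\in S}\reach(v,\langle G,\sigma\rangle)$ for the set reached from $S$ under a delaying scheme $\sigma$, I would show $R(\sigma^*)\subseteq R(\sigma)$ for \emph{every} valid $\delta$-delaying scheme $\sigma$. Validity and efficiency of $\sigma^*$ are immediate: every relabelling is guarded by the $\delta$-possible test, so no edge is ever moved past its original label plus $\delta$, and the loop runs $\tmax+\delta-1$ times, each iteration computing $RE_t$ and updating in polynomial time. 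The containment then forces $R(\sigma^*)=\bigcap_\sigma R(\sigma)$, the smallest reachable set achievable, which is exactly what \minreach minimises. For \minmaxreach and \minavgreach I would lean on the remark that when $|S|=1$ all three objectives coincide with \minreach, so optimality is inherited; for $|S|>1$ one re-runs the same argument from each individual source to get the per-source containments $\reach(s,\langle G,\sigma^*\rangle)\subseteq\reach(s,\langle G,\sigma\rangle)$, which minimise the maximum and the sum simultaneously.

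The technical heart is a characterisation of the first-arrival time $a^*(v)$ of each vertex in $\sigma^*$ (set $a^*(v)=\infty$ if $\sigma^*$ never reaches $v$). I would prove that it obeys $a^*(s)=0$ for $s\in S$ and, for $v\notin S$,
\[
a^*(v)=\min\bigl\{\, x_{uv}+\delta \;:\; uv\in E,\ a^*(u)<x_{uv}\,\bigr\},
\]
where $x_{uv}$ is the \emph{original} label and the empty minimum is $\infty$. This is a statement about the mechanics of the loop: once $u$ is reached at time $a^*(u)$, an incident edge $uv$ can deliver reachability to $v$ only if its original label already exceeds $a^*(u)$ (if $x_{uv}\le a^*(u)$ the edge is never selected into any $RE_t$, so it stays put and is dead for strict paths); and whenever it is live, the algorithm delays it by one at every subsequent step — after each delay $v$ is again the freshly reached endpoint at the new label — until it hits its ceiling $x_{uv}+\delta$. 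I would establish this by induction on $t$, showing that after processing step $t$ the set $\{v:a^*(v)\le t\}$ and the labels of all edges that have reached their final value $\le t$ match the recurrence.

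Granting the recurrence, the containment $R(\sigma^*)\subseteq R(\sigma)$ is a one-line induction on $a^*(v)$. If $\sigma^*$ reaches $v$ then some incident edge $e=uv$ realises the arrival, i.e.\ $a^*(v)=x_e+\delta$ with $a^*(u)<x_e$. For an arbitrary valid $\sigma$ the inductive hypothesis gives $a_\sigma(u)\le a^*(u)<x_e$; since $\sigma$ assigns $e$ some label $\ell\in[x_e,x_e+\delta]$ and $\ell\ge x_e>a_\sigma(u)$, the edge $e$ is usable from $u$ in $\sigma$ and reaches $v$ no later than $\ell\le x_e+\delta=a^*(v)<\infty$. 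Hence $v\in R(\sigma)$; the base case is that sources are reached at time $0$ in every scheme. This yields $R(\sigma^*)\subseteq R(\sigma)$ for all $\sigma$ and hence \minreach-optimality.

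The main obstacle is the mechanical lemma of the second paragraph: everything rests on the claim that the greedy, one-step-at-a-time delaying actually drives each live reaching edge all the way to its ceiling $x_e+\delta$, correctly kills edges whose original label is already at most $a^*(u)$, and — in the delicate case where several edges race into the same vertex — resolves to the minimum ceiling via a lock-step capping argument. Getting the bookkeeping of the induction on $t$ right, and checking that the loop bound $\tmax+\delta-1$ is large enough to realise every ceiling, is where the real effort lies; once the recurrence is in place the optimality argument is essentially the exchange above. A secondary point to handle with care is that the algorithm is driven by reachability from $S$ as a \emph{whole}, so for \minmaxreach and \minavgreach with several sources one should either restrict to the single-source case (where the objectives coincide with \minreach) or re-derive the recurrence and the containment separately from each source.
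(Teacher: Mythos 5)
Your proposal is correct in spirit but follows a genuinely different route from the paper, and in fact a structurally sounder one. The paper argues by induction on the time step $t$, claiming that Algorithm~\ref{alg:minreach-delay} minimizes the cumulative count $\sum_{i\le t}|RV_i|$ and, in the inductive step, that the per-step increment $|RV_t|$ under the greedy schedule is at most the increment under any other scheme $\hat{\sigma}$. That per-step comparison is delicate because the two processes have different histories at time $t$ (indeed a scheme that delays nothing may have $|RV_t(\hat\sigma)|=0$ for large $t$ simply because everything was already reached earlier), so the cardinality-by-cardinality induction does not compose on its own. Your invariant --- arrival-time domination $a_\sigma(v)\le a^*(v)$ for every vertex and every valid $\sigma$, proved by induction on vertices ordered by $a^*$, yielding the set containment $R(\sigma^*)\subseteq R(\sigma)$ --- is exactly the cumulative, history-free statement that makes the exchange argument go through, and it is what the paper's induction implicitly needs. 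Your explicit recurrence $a^*(v)=\min\{x_{uv}+\delta : a^*(u)<x_{uv}\}$ also isolates the one mechanical fact that the paper disposes of in a single sentence (``the algorithm will delay all the edges it can with label $t$''), which is the right thing to make precise.

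Two loose ends remain, one of which the paper shares. First, proving your recurrence requires committing to a reading of $RE_t$: under the paper's literal definition (``edges with label $t$ adjacent to vertices in $RV_t$''), an edge $uv$ with $x_{uv}=a^*(u)$ is selected and delayed, which \emph{revives} an edge that was dead for strict paths and would break both your recurrence and optimality itself; your condition $a^*(u)<x_{uv}$ silently adopts the intended reading in which only edges that actually deliver a newly reached endpoint are delayed. Second, for \minmaxreach and \minavgreach with $|S|>1$ the per-source containment $\reach(s,\langle G,\sigma^*\rangle)\subseteq\reach(s,\langle G,\sigma\rangle)$ does not follow by ``re-running the same argument'': the algorithm's delays are triggered by the earliest arrival from $S$ as a whole, and delaying an edge past a later per-source arrival time can make that edge usable from an individual source when it was not before. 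You flag this; the paper's proof does not address it at all, as its induction only concerns the union objective. So your plan is at least as complete as the published proof, but the recurrence lemma and the multi-source case are the places where real work is still owed.
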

\begin{proof}
To prove the lemma, we will prove by induction that at any time step $1 \leq t < \tmax$ Algorithm~\ref{alg:minreach-delay} minimizes $\sum_{i=1}^t RV_{i}(G,\tcald^D, S)$, i.e., it minimizes the number of reachable vertices until time $t$. 
For $t=1$ the claim clearly holds. Every vertex reachable at time step 1 will be reached unless all of the edges with label 1 adjacent to a source get delayed. Algorithm~\ref{alg:minreach-delay} indeed delays all these edges, hence the claim holds.
For the induction hypothesis, assume that the lemma holds up to time step $t-1$. This means that $\sum_{i=1}^{t-1} RV_{i}(G,\tcald^D, S) \leq \sum_{i=1}^{t-1} RV_{i}(G,\hat{\tcald}, S)$ where $\hat{\tcald}$ is any other $\delta$-delaying scheme. 
In order to prove the lemma it suffices to prove that $RV_{t}(G,\tcald^D, S)\leq RV_{t}(G,\hat{\tcald}, S)$, since the correctness will follow from the induction hypothesis. 
To see why this is the case we observe that the algorithm will delay all the edges it can with label $t$ until time step $t+1$. Hence, the set of reachable vertices will increase only by the minimum possible number; this number is dictated by the number of edges that it is not $\delta$-possible to delay. 
\end{proof}

With Lemma~\ref{lem:alg-correct} in hand we can prove the following theorem.
\begin{theorem}
\label{thm:algo-delay}
\minreach, \minmaxreach, \minavgreach can be solved in polynomial time under $\delaying{\delta}$.
\end{theorem}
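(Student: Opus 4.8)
The plan is to obtain the theorem as an essentially immediate consequence of Lemma~\ref{lem:alg-correct}, supplemented by a running-time analysis of Algorithm~\ref{alg:minreach-delay}. Lemma~\ref{lem:alg-correct} already certifies that the schedule $\tcald^D$ returned by the algorithm is optimal for all three objectives \minreach, \minmaxreach, and \minavgreach simultaneously, so no further argument about correctness is needed; what remains is to check that the algorithm can be implemented to run in time polynomial in the size of $\tgraph$.

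First I would bound the number of iterations of the outer loop. Since every $\delta$-delay moves a label forward by at most $\delta$, an edge that originally carries label $x$ can never be assigned a label exceeding $x+\delta$; consequently all relevant time steps lie in $\{1,\dots,\tmax+\delta\}$ and the loop runs $O(\tmax+\delta)$ times. Next I would bound the cost of a single iteration. The only nontrivial computation is $RE_t(\tuple{G,\tcald^D},S)$, which by definition is determined by the vertices first reached from $S$ at time $t$; these are read off from the earliest-arrival times, computable in polynomial time in the size of $\tgraph$ by the procedure of~\cite{WC+14}, after which collecting the incident label-$t$ edges is immediate. The inner loop then inspects at most $|E|$ edges, testing whether each relabelling is $\delta$-possible and updating $\tcald^D$ in polynomial time each. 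Multiplying the per-iteration cost by the $O(\tmax+\delta)$ iterations yields a polynomial total running time, and combined with the optimality furnished by Lemma~\ref{lem:alg-correct} this proves the theorem.

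The step I expect to demand the most care is confirming that the running time is genuinely polynomial, which hinges on the factor $\tmax+\delta$ contributed by the outer loop. In the representation used throughout the paper the temporal graph is given as the explicit sequence of edge sets $E_1,\dots,E_{\tmax}$, so $\tmax$ is at most the input size, and $\delta$ is treated as a polynomially bounded parameter; under these conventions the bound $O(\tmax+\delta)$ is polynomial and the argument goes through. The one point requiring attention is that $RE_t(\tuple{G,\tcald^D},S)$ is recomputed at each iteration on the \emph{current} schedule $\tcald^D$, which is modified as the loop progresses; since each such recomputation is an independent earliest-arrival computation on the schedule in force at that time, the per-iteration polynomial bound of~\cite{WC+14} applies unchanged, and the overall running time remains polynomial.
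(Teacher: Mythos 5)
Your proposal is correct and follows essentially the same route as the paper: correctness is delegated entirely to Lemma~\ref{lem:alg-correct}, and the running time is bounded by $O(\tmax+\delta)$ iterations of the outer loop, each costing polynomial time (the paper is in fact terser here, noting only that each iteration inspects and relabels edges in linear time, whereas you additionally spell out the earliest-arrival computation of $RE_t$ via~\cite{WC+14} and the fact that $\tmax$ is bounded by the input size). No gaps; your extra care about recomputing $RE_t$ on the evolving schedule is a harmless elaboration of the same argument.
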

\begin{proof}
Lemma~\ref{lem:alg-correct} shows that Algorithm~\ref{alg:minreach-delay} is indeed optimal for \minreach, \minmaxreach, \minavgreach under $\delaying{\delta}$. It remains to show that the algorithm indeed requires polynomial time with respect to the input temporal graph. This is easy to see. At every iteration the algorithm checks the labels of the edges and increases the label of a subset of them. This clearly can be done in linear time in the size of the graph. In addition, we have $t_{\max}+\delta$ iterations, hence the overall running time of the algorithm is polynomial in the size of the input temporal graph.
\end{proof}

%%%%%%%%%%%%%%%%%%%%%%%%%%%%%%%%%%%%%%%
%%%%%%%%%%%%%%%%%%%%%%%%%%%%%%%%%%%%%%%
%%%%%%%%%%%%%%%%%%%%%%%%%%%%%%%%%%%%%%%
%\newpage
\section{Discussion}
In this section we discuss some further results results that can be easily derived from our \NP-hardness reductions. These include some inapproximability results and \NP-hardness for directed acyclic graphs and unit-disk graphs. In addition to these we highlight several challenging problems our work creates or leaves open.

\paragraph{\bf Inapproximability results.}
Our hardness results immediately imply, or can be easily extended to prove, several other interesting results. Firstly, we observe that all of our reductions under the merging operations are approximation preserving, thus since we use \maxsat, we get that there are no approximation schemes for these problems, unless $\PP = \NP$. More formally, we get that there exists an absolute constant $c$ such that the objectives we study are \NP-hard even to approximate better than $c$ under merging operations. A natural question is to ask whether we can find a polynomial-time algorithm with constant approximation
\begin{quote}
{\em Is there a polynomial-time algorithm for reachability objectives that achieves constant approximation under the merging operation? If not, can we get any non-trivial approximation in polynomial time?}
\end{quote}
We believe this is a challenging question due to the temporal nature of the model and thus novel techniques are required in order to tackle it. For example, there are instances where the optimal merging scheme uses $k$ merges and makes almost the whole graph not reachable, but  any subset of these merges does {\em not} decrease the reachability set at all, while any other merge makes only a constant number of vertices unreachable; see Figure~\ref{fig:greedy-bad}. This indicates that a good approximation algorithm cannot consider the merges independently.

\begin{figure}
\begin{center}
  \includegraphics[scale=0.30]{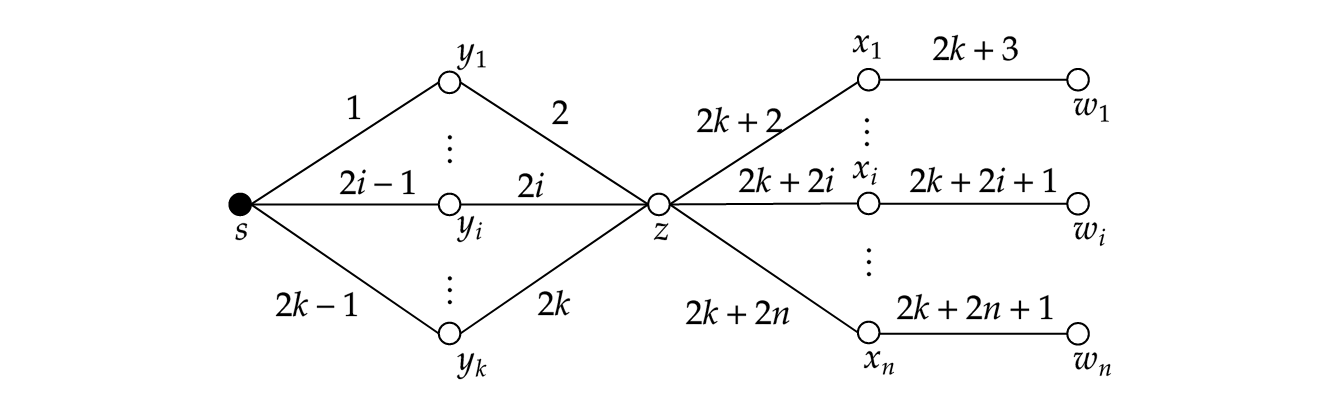}
  \caption{An example where the $(2,k)$-merging scheme that minimizes the reachability set of $s$ merges label $2i-1$ with label $2i$  for every $i \in [k]$. This results to a reachability set of size $k+1$. Any strict subset of these merges does not make any vertex unreachable, while any other 2-merge can make at most one vertex vertex unreachable. This example shows that the greedy algorithm that chooses to merge an edge that maximizes the set of unreachable vertices given the current merges can result into a bad approximation; consider for example the case where $k \ll n$.}
  \label{fig:greedy-bad}
  \end{center}
%  \vspace{-0.5cm}
\end{figure}

\paragraph{\bf Directed graphs.}
Another  remark is that all of our constructions can be converted to DAGs without breaking the correctness of the reductions. Figure~\ref{fig:min-tree-new} shows how to add directions to the gadget used to prove \NP-hardness for \minreach in paths under delaying. In the majority of our constructions there is a natural ``flow'' so the directions can be added trivially. For example the directions of the edges to all  tree-constructions will should be towards the leaves of the tree and the unique vertex with no incoming edges is the vertex that belongs to $S$. The only non-trivial case is the construction depicted in Figure~\ref{fig:max-reach-path}. For this construction we show in Figure~\ref{fig:directed-one} how to add directions to the edges without introducing any directed cycles. The correctness of the reduction is identical to the undirected case and for this reason it is omitted. So, all of hardness results hold in very constrained classes of directed graphs. However, real-life transportation networks are rarely acyclic and they may contain temporal cyclic paths. Hence, the following question can capture the scenario described above.

\begin{quote}
    {\em What is the complexity of \maxreach when there are at least $k$ temporal paths between any two vertices?}
\end{quote}

\begin{figure}
\begin{center}
  \includegraphics[scale=0.30]{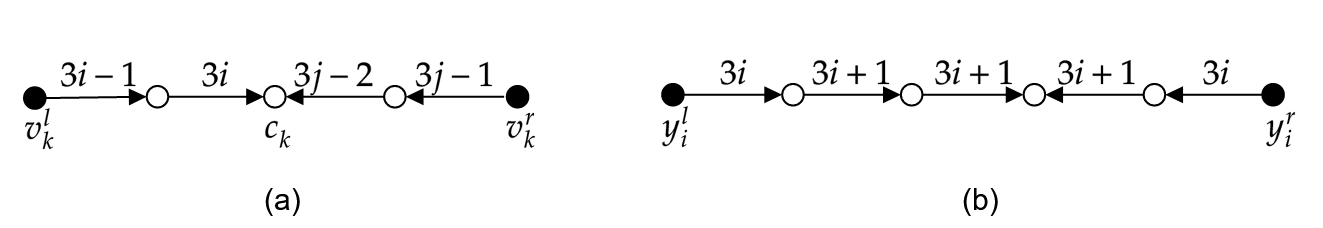}
  \caption{The directed gadgets for \maxreach on paths. Subfigure (a): c-path for the
  clause $(\bar{x}_i,x_j)$. Subfigure (b): v-path for variable $x_i$. }
  \label{fig:directed-one}
  \end{center}
%  \vspace{-0.5cm}
\end{figure}

Another aspect of our hardness reductions is that the number of sources is a constant fraction of the vertices of the graph, if not just one. A natural question to ask then is what happens when almost all the vertices are sources, i.e. we have only a constant number of {\em non} sources. Our reductions cannot handle these cases and we conjecture that these problems can be solved efficiently. In fact, we conjecture that there exists an FPT algorithm parameterized by the number of non sources.

\paragraph{\bf Unit-disk graphs.}
%Another, 
Less obvious observation is that all of our \NP-hardness results for merging hold even for unit disk graphs. This observation is obvious for paths, but it is not completely  obvious for the rest of the graph classes we have constructed, since there is no direct drawing of these {\em specific} graphs via unit disk graphs. However, we can get \NP-hardness with some minor modifications. Figure~\ref{fig:unit-disks} shows how to modify the construction from Figure~\ref{fig:min-one} in order to get a unit-disk graph. More specifically, we slightly modify the the graph by adding the $q'$-vertices and updating the labels as it is shown in Figure~\ref{fig:unit-disks}(b). Following exactly the same lines of proof as in Section~\ref{sec:min-one}, we can prove that \minreach remains \NP-hard in the created instance. Then, Figure~\ref{fig:unit-disks}(c) shows how to draw this graph as a unit-disk. Using similar tricks we can extend the rest of our \NP-hardness results for unit-disk graphs for \maxreach; these constructions are easy to derive and they are left as exercises to the reader.

\begin{figure}
\begin{center}
  \includegraphics[scale=0.30]{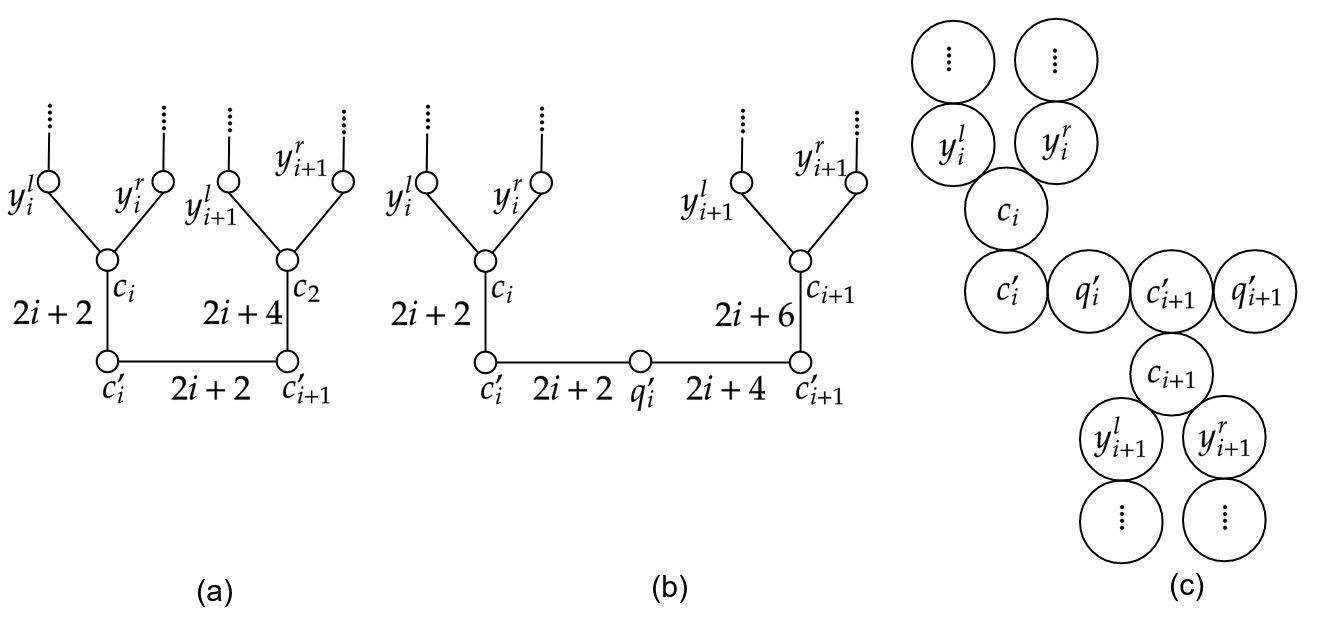}
  \caption{Subfigure (a): The gadget from \minreach on trees. Subfigure (b): The addition of the $q'$-vertices and the modification of the labels. Subfigure (c): Drawing of the gadget as a unit-disk graph. }
  \label{fig:unit-disks}
  \end{center}
%  \vspace{-0.5cm}
\end{figure}

\paragraph{\bf Delaying and maximization objectives.}
For delaying we studied only the minimization problems. We think that maximization problems under delaying operations is another important problem to study. In this setting there is a plethora of problems to ask, all of them having immediate applications to real life problems. These questions can ask only for reachability objectives, or taking the time needed to reach a vertex into account.  When time is taken into account then delaying operations can help us to reach {\em faster} some vertices; see Figure~\ref{fig:delaying-fast}. However, some times this comes at a cost of making some other vertices unreachable ( Fig.~\ref{fig:delaying-fast}). We choose to state a more fundamental question related maximum reachability that it does not have an obvious answer. Given a temporally disconnected temporal graph, can we make it connected using delaying operations?

\begin{figure}%[h!]
\begin{center}
  \includegraphics[scale=0.30]{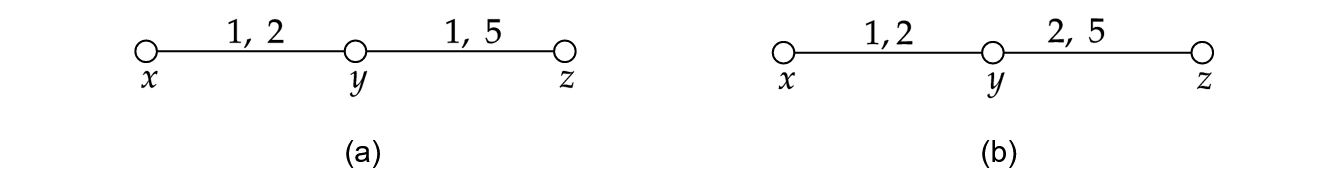}
  \caption{Subfigure (a): The initial graph. $x$ reaches $z$ at time step 5. Subfigure (b): The delaying of label 1 to 2 for edge $yz$ allowed $x$ to reach $z$ at time step 2. Observe though, this delay made $x$ ureachable from $z$.}
  \label{fig:delaying-fast}
  \end{center}
%  \vspace{-0.5cm}
\end{figure}

\paragraph{\bf Network re-optimization.}
As we have seen our work creates many interesting questions for future research both related to reachability questions and to temporal graphs in general. Reachability objectives on temporal graphs can be studied under other notions of temporal paths, restless temporal paths is such an example~\cite{casteigts2019computational,thejaswi2020restless}. We view the our paper as a first step towards a new conceptual research direction in temporal graphs; network re-optimization. 

\begin{quote}
{\em Given a temporal network with an {\em existing} solution for a problem, can we utilize the current infrastructure in a better way and improve the solution without significantly changing the network?}
\end{quote}
\noindent
 We believe that the answer is positive and that it deserves to be further studied.

\section*{Acknowledgements}
The authors would like to thank Valentin Bura and Vladimir Gusev for several interesting 
discussions on reachability sets of temporal graphs at the beginning of the project. In addition, the authors would like to thank the anonymous reviewer whose insightful comments significantly helped us to improve the write-up and the consistency of the paper.

%\newpage

\bibliographystyle{abbrv}
%\bibliography{main}
%\bibliographystyle{aaai}
\bibliography{references}

\end{document}